\newtheorem{thm}{Theorem}
\newtheorem{prop}{Proposition}
\newtheorem{cor}{Corollary}
\newcommand{\ignore}[1]{}
\newcommand{\R}{\mathcal{R}}
\newcommand{\Q}{\mathcal{Q}}
\newcommand{\A}{\mathcal{A}}
\newcommand{\D}{\mathcal{D}}
\newcommand{\C}{\mathcal{C}}
\newcommand{\cL}{\mathcal{L}}
\newcommand{\RR}{\mathbb{R}}
\newcommand{\citep}{\cite}
\title{Tractable models of self-sustaining autocatalytic networks}
\author{Mike Steel\thanks{Biomathematics Research Centre, University of Canterbury, Christchurch, New Zealand (\email{mike.steel@canterbury.ac.nz}).}
\and Wim Hordijk\thanks{Institute for Advanced Study, University of Amsterdam, The Netherlands (\email{wim@WorldWideWanderings.net}).}}
\begin{document}

\maketitle

\begin{abstract}
Self-sustaining autocatalytic networks play a central role in living systems, from metabolism at the origin of life, simple RNA networks, and the modern cell, to ecology and cognition. A collectively autocatalytic network that can be sustained from an ambient food set is also referred to more formally as a `Reflexively Autocatalytic F-generated' (RAF) set. In this paper, we first investigate a simplified setting for studying RAFs, which are nevertheless relevant to real biochemistry and allows for a more exact mathematical analysis based on graph-theoretic concepts. This, in turn, allows for the development of efficient (polynomial-time) algorithms for questions that are computationally NP-hard in the general RAF setting. We then show how this simplified setting for RAF systems leads naturally to a more general notion of RAFs that are `generative' (they can be built up from simpler RAFs) and for which efficient algorithms carry over to this more general setting. Finally, we show how classical RAF theory can be extended to deal with ensembles of catalysts as well as the assignment of rates to reactions according to which catalysts (or combinations of catalysts) are available.
\end{abstract}

\begin{keywords}autocatalytic network, directed graph, strongly-connected components, cycles, closure, reaction rates
\end{keywords}

\begin{AMS}
  05C20, 05C38, 47N60, 68Q25, 68R10
\end{AMS}

\section{Introduction}

A central property of the chemistry of living systems is that they combine two basic features: (i) the ability to survive on an ambient food source, and (ii) each biochemical reaction in the system requires only reactants and a catalyst that are provided by other reactions in the system (or are present in the food set).   The notion of a self-sustaining `collectively autocatalytic set' tries to capture these basic features formally, and was pioneered by Stuart Kauffman \citep{kau1, kau2},  who investigated a simple binary polymer model to address questions that relate to the origin of life.  The notion of a collectively autocatalytic set was subsequently formalised more precisely as `Reflexively Auto-catalytic and F-generated' (RAF) sets (defined shortly) and explored by others \citep{vas, hor17}.
RAFs are related to other notions such as Rosen's (M;R) systems \citep{jar}, and `organisations' in Chemical Organisation Theory \citep{hor17a}.   The application of RAFs has expanded beyond toy polymer models to analyse both real living systems (e.g. the metabolic network of {\em Escherichia  coli} \citep{sou}) and simple autocatalytic RNA systems that have  recently been generated in laboratory studies  by \cite{vai}.

The generality of RAF theory also means that a `reaction' need not refer specifically to a chemical reaction, but  to any process in which `items' are combined and transformed into new `items', and where similar `items' that are not used up in the process  facilitate (or `catalyse') the process. This has  led to  application of RAF theory to processes beyond biochemistry, including biodiversity \citep{gat}, cognitive psychology \citep{gab},  and (more speculatively) economics \citep{hor17}. 

In this paper, we show how RAF theory can be developed further to:
\begin{itemize}
\item
provide an exact and tractable characterisation of RAFs and subRAFs when reactants involve just food molecules;
\item
extend this last concept to general catalytic reaction networks by defining a new type of RAF (`generative') which couples realism with tractability; and
\item
include reaction rates into RAF theory and show that an optimal RAF can be calculated in polynomial time.
\end{itemize}
  
We begin with some definitions.

\subsection{Catalytic reaction systems (CRSs)}

A {\em catalytic reaction system} (CRS) consists of a set $X$ of `molecule types', a set $\R$ of `reactions', an assignment  $C$ describing  which molecule types catalyse which reactions, and a subset $F$ of $X$
consisting of a `food set' of basic building block molecule types  freely available from the environment.  Here, a `reaction' refers to a process that takes one or more molecule types (the `reactants') as input and produces one or more molecule types 
as output (`products'). $C$ can be viewed as a subset of $X \times \R$.   

A CRS can be represented mathematically  in two essentially equivalent ways:  Firstly as a directed graph with two types of vertices (corresponding to molecule types (some of which lie in the food set $F$) and reactions) and two types of arcs (arcs from molecule types into and out of reaction vertices (as reactants and products, respectively, and arcs from molecule types to reactions to denote catalysis). Fig.~\ref{fig1} provides a simple example of a CRS represented in this way.

Alternatively, one can list the reactions explicitly, writing each in the form $$r: A \xrightarrow{c_1, c_2, \ldots} B,$$ where $A$ denotes the set of reactants of reaction $r$, $B$ the set of products of $r$, and $c_1, c_2, \ldots$ are the possible catalysts for $r$.  For example, for $r_2$ and $r_3$ in the CRS of Fig.~\ref{fig1} we write:
 $$r_2: f_2+f_3  \xrightarrow{p_3} p_2, $$ 
 $$r_3: f_3+f_4 \xrightarrow{p_2} p_3,$$
to denote that  $r_2$ is catalysed by $p_3$ and $r_3$ is catalysed by $p_2$.

 \begin{figure}[htb]
\centering
\includegraphics[scale=1.1]{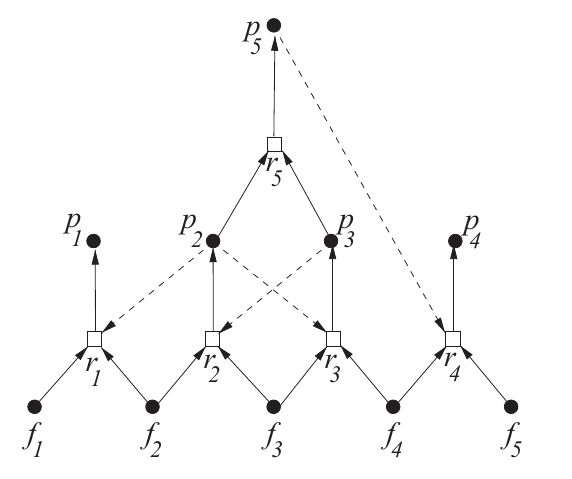}
\caption{A simple CRS  involving five reactions (box vertices), $r_1, \ldots, r_5$, and ten molecule types (round vertices) namely, a food set $F = \{f_1, \ldots, f_5\}$ together with  five other molecule types $p_1, \ldots, p_5$).  Catalysation arcs are shown as dashed arrows.  
In this CRS, the subsets $\{r_2, r_3\}$ and $\{r_1, r_2, r_3\}$ are the two RAFS (the former is an irrRAF, the latter is the maxRAF). In this example, each reaction has exactly two reactants, one product, and at most one catalyst,  however a CRS can have reactions with an arbitrary number of reactants, products and possible catalysts.
 }
\label{fig1}
\end{figure}

\bigskip

{\bf Definitions (RAFs, maxRAF)}.

Given a CRS $\Q=(X,\R, C, F)$, a subset $\R'$ of $\R$ is a said to be a {\em RAF} for $\Q$ if $\R'$ is nonempty and satisfies the following two conditions.
\begin{itemize}
\item {\em Reflexively autocatalytic} (RA): each reaction $r \in \R'$  is catalysed by at least one molecule type that is either present in the food set or is generated by another reaction in $\R'$.
\item {\em Food-generated (F)}: for each reaction $r\in \R'$, each reactant of $r$ is either present in the food set $F$ or can be generated by a sequence of reactions from $\R'$, each of which has each of its reactants present either in the food set or as a product of an earlier reaction in the sequence.
\end{itemize}
In other words, a RAF is a subset of reactions that is both self-sustaining (from the food set) and collectively autocatalytic.  In forming a RAF from the food set, some reactions may initially need to proceed uncatalysed (and thereby at a lower rate) 
but once formed every reaction in the RAF will be catalysed.  A simple example of a RAF is the pair of reactions $\{r_3, r_4\}$ shown in the CRS of Fig.~\ref{fig1}. 
 Note that in this example either $r_3$ or $r_4$ must first proceed uncatalysed, but once one reaction has occurred, the system continues with both reactions catalysed.

The food-generated condition (F) can also be formalised as follows:  For an arbitrary subset $\R'$ of $\R$ let 
${\rm cl}_{\R'}(F)$ be the (unique) minimal subset $W$ of $X$ that contains $F$ and has the property that if $r \in \R'$ and all the reactants of $r$ are in $W$ then the product(s) of $r$ are also in $W$. The (F) condition now becomes the statement that
 each reactant of each reaction in $\R'$  is present in ${\rm cl}_{\R'}(F)$.   Note also that, assuming the (F) condition holds, the 
 (RA) condition becomes equivalent to the stronger condition that each reaction $r \in \R'$  is catalysed by at least one molecule type that is present in ${\rm cl}_{\R'}(F)$.
 
Two fundamental combinatorial results concerning RAFs (from   \citep{hor}) which will be applied in this paper are the following:
\begin{itemize}
\item If $\Q$ has a  RAF then it has a unique maximal RAF which contains all other RAFs for $\Q$ (referred to as the {\em maxRAF} of $\Q$, denoted ${\rm maxRAF}(\Q)$).
\item Determining whether or not $\Q$ has a RAF, and if so constructing ${\rm maxRAF}(\Q)$ can be solved by an algorithm that is polynomial-time in the size of $\Q$.
\end{itemize}

By contrast to the second point, finding a {\em smallest} RAF in a CRS $\Q$ has been shown to be NP-hard \cite{ste13}.  The maxRAF of the CRS shown in 
Fig.~\ref{fig1} is $\{r_1, r_2, r_3\}$.


\bigskip

{\bf Definitions:} (subRAFs, irrRAFs, closure, closed RAFs, CAFs)

We now introduce some further notions related to different types of RAFs.

The maxRAF of a CRS $\Q$ may contain  one or more proper subsets of reactions that are themselves RAFs for $\Q$, in which case we call any such subset a {\em subRAF} of the maxRAF. 

A RAF  $\R'$ is said to be an {\em irreducible} RAF (irrRAF) if it contains no proper subset of $\R'$ that is a RAF. 
In other words, removing any single reaction from an irrRAF  $\R'$ gives a set of reactions that does not contain a RAF for $\Q$.  Constructing an irrRAF for $\Q$ (or determining than none exists when $\Q$ has no RAFs) can also be carried out in polynomial-time \cite{hor}, however the number of irrRAFs can grow exponentially with the size of the CRS  \cite{hor12}.
To illustrate this notion, the RAF $\{r_2, r_3\}$ is the only irrRAF for the CRS in Fig.~\ref{fig1}.

Given any subset $\R'$ of reactions from $\R$, the {\em closure} of $\R'$ in $\Q$, denoted $\overline{\R'}$ is the (unique) minimal subset $\R''$  of $\R$ that contains $\R'$ and satisfies the property that if
a reaction $r$ from $\R$ has each of its reactants and at least one catalyst present in the food set or as a product of a reaction from $\R''$ then $r$ is in $\R''$.  
It is easily seen that the  closure of any RAF is always a RAF.  We say that a RAF $\R'$  is a {\em closed} RAF  if it is equal to its closure (i.e. $\R' = \overline{\R'}$). In particular, the maxRAF is always closed (closed RAFs are the type of RAF that is most closely related to, but still different from, organisations in Chemical Organisation Theory \citep{hor17a}).
Referring again to Fig.~\ref{fig1}, the closure of the RAF $\{r_2, r_3\}$ is the maxRAF $\{r_1, r_2, r_3\}$.

 A {\em minimal closed RAF} for a CRS $\Q$ is a closed RAF $\R'$ for $\Q$ that does not contain any other closed RAF for $\Q$ as a strict subset. Any closed irrRAF is a minimal closed RAF but a minimal closed RAF need not be an irrRAF.   Once again Fig.~\ref{fig1} illustrates this last concept: for this CRS, the minimal closed RAF is the maxRAF $\{r_1, r_2, r_3\}$ but it is not an irrRAF since it contains the RAF $\{r_2, r_3\}$.

Given a CRS $\Q=(X,\R, C, F)$, a stronger notion than a RAF is that of a {\em constructively autocatalytic F-generated} (CAF) set for $\Q$ (introduced in \cite{mos}).  A CAF for $\Q$ is a nonempty subset $\R'$ of $\R$ for which the reactions in $\R'$ can be ordered in such a way that for each reaction $r$ in $\R'$, each reactant and at least one catalyst of $r$ has the property that it is either produced by an earlier reaction from $\R'$ or is present in the food set. In other words, a CAF is like a RAF with the extra requirement that no spontaneous (uncatalysed) reactions are required for its formation (i.e. the catalyst needs to be already present when it is first needed). For example, both the RAFs in Fig.~\ref{fig1} fail to be a CAF.


\section{The structure of RAFs in `elementary' catalytic reactions systems}

Let CRS $\Q = (X, \R, C, F)$.  We say that $\Q$ is {\em elementary} if it satisfies the following condition:
\begin{itemize}
\item Each reaction $r$ in $\R$ has all its reactants in $F$.
\end{itemize}

An elementary CRS is a very special type of CRS; however it has arisen both in applications to real experimental chemical systems  \citep{ash, vai} and in theoretical models  \citep{jai98}.  The  CRS shown in Fig.~\ref{fig1} is not an elementary CRS, but it becomes so if reaction $r_5$ is removed.  
It is possible to extend the definition of elementary CRS  to also allow for reversible reactions, by requiring only one side of the reaction to contain molecule types that are exclusively from $F$.

In this section, we show that
elementary RAFs have sufficient structure to allow a very concise classification of their RAFs, closed subRAFs, irrRAFs, and `uninhibited' closed RAFs (a notion described below), something which is
problematic in general. We then extend this analysis to more complicated types of RAFs in the next section.  

Our analysis in this section relies heavily on some key notions from graph theory, so we begin by recalling some concepts from that area. 

\subsection{Definitions}
In this paper, all graphs will be finite.  Given a directed graph $\D=(V,A)$, recall that a {\em strongly connected component} of $\D$ is a maximal subset $W$ of $V$ with the property that for any vertices $u, v$ in $W$, there is a path from $u$ to $v$ and a path from $v$ to $u$.

It is a classical result that for any directed graph $\D=(V,A)$, the vertex set $V$ can be partitioned into strongly connected components. This, in turn, induces a directed graph structure, called the {\em condensation (digraph) of $\D$},  which we will denote by $\D^*$. In this directed graph, the vertex set is the collection of strongly connected components of $\D$ and there is an 
arc $(U,V)$ in $\D^*$ if there is an arc $(u,v)$ in $\D$ with $u \in U$ and $v \in V$.   By definition, $\D^*$ is an acyclic directed graph. 
Moreover, the task of partitioning $V$ into strongly connected components and constructing the graph $\D^*$ can both be carried out in polynomial time  \citep{tar}.
Note that the strongly connected component containing $v$ will consist just of $v$ if $v$ is not part of a cycle involving another vertex.

We now introduce some further definitions.   Given a directed graph $\D=(V,A)$: 

\begin{itemize}
\item We say that a strongly connected component $S$ of $\D$ is a {\em core} if either $|S|>1$ or $|S|=1$ (say $S=\{r\}$) and there is an arc from $r$ to itself. Note that $\D$ has a core if and only if $\D$ has a directed cycle.
\item A {\em chordless cycle} in a directed graph $\D= (V, A)$ is a subset $U$ of vertices of $\D$ for which the induced graph $\D|U$ is a directed cycle (here $\D|U = (U, A')$ where the arc set $A'$ for $\D|U$ is given by $A'= \{(u,v)\in A: u, v \in U\}$). Note that if $|U|=1$, this means that there is an arc from the vertex in $U$ to itself.
\item A vertex $v$ in $V$ is {\em reachable} from some subset $S$ of $V$ if there is a directed path from some vertex in $S$ to $v$. More generally, a subset $U$ of $V$ is reachable from $S$ if there is some vertex $v \in U$ that is reachable from $S$.
\end{itemize}
The terminology `core' follows a similar usage by \cite{vas}, in which the set of vertices (molecule types) that are reachable from a core is referred to as the `periphery' of the core. 

\subsection{First main result} 

The following theorem provides graph-theoretic characterisations of RAFs,  irrRAFs, closed RAFs, and minimal closed RAFs within any elementary CRS.

Given any CRS, $\Q$, consider the directed graph $\D_{\Q}$ with vertex set $\R$ and with an arc $(r, r')$ if a product of reaction $r$ is a catalyst of 
reaction $r'$.  In addition, for any reaction $r$ that has a catalyst in $F$, we add the arc $(r,r)$ (i.e. a loop) into $\D_{\Q}$ if this arc is not already present; this step is just a formal strategy to allow the 
results to be stated more succinctly, and does not necessarily mean that a product of $r$ is an actual catalyst of $r$.

\begin{thm}
\label{bulletthm}
Let $\Q$ be an elementary CRS.  Then:
\begin{itemize} 
\item[(i)] $\Q$ has a RAF if and only if $\D_{\Q}$ has a directed cycle, and this holds  if and only if $\D_{\Q}$ contains a chordless directed cycle.  The RAFs of $\Q$ correspond to the subsets $\R'$ of $\R$ for which the induced directed graph 
$\D_{\Q}|\R'$ has the property that each vertex has in-degree at least 1.
\item[(ii)] The irrRAFs of $\Q$ are the chordless cycles in $\D_{\Q}$. The closed irrRAFs of $\Q$ are chordless cycles from which no other vertex of $\D_{\Q}$ is reachable. 
The smallest RAFs of $\Q$ are  the shortest directed cycles in $\D_{\Q}$.
\item[(iii)] The closed RAFs of $\Q$ are the subsets of $\R$ obtained by taking the union  of any
 one or more cores of $\D_{\Q}$ and adding in all the reactions in $\R$  that are reachable from this union. 
\item[(iv)] Each minimal closed RAF of $\Q$ is obtained by taking any core $C$  of $\D_{\Q}$ for which no other core of $\D_{\Q}$ is reachable from $C$, and adding in all reactions in $\R$ that are reachable from $C$.
\item[(v)] The number of minimal closed RAFs of $\Q$ is at most the number of cores in $\D_{\Q}$, and thus it is bounded above by $|{\rm maxRAF}(\Q)|$. These can all be found and listed in polynomial time in $|\Q|$.
\item[(vi)] The question of whether or not a given RAF for $\Q$ (e.g. the maxRAF) contains a closed RAF as a strict subset can be solved in polynomial-time.
\end{itemize}
\end{thm}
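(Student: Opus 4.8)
The plan is to reduce the question to testing the minimal closed RAFs of $\Q$ against $\R'$, of which, by part (v), there are only polynomially many.

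First I would establish the following reduction: \emph{a RAF $\R'$ for $\Q$ contains a closed RAF of $\Q$ as a strict subset if and only if some minimal closed RAF of $\Q$ is a strict subset of $\R'$}. The ``if'' direction is immediate, since a minimal closed RAF is in particular a closed RAF. For the converse, suppose $\R''$ is a closed RAF of $\Q$ with $\R'' \subsetneq \R'$. The family of closed RAFs of $\Q$ that are contained in $\R''$ is finite and nonempty, hence has an inclusion-minimal member $\R'''$; moreover $\R'''$ is a \emph{minimal} closed RAF of $\Q$, because any closed RAF of $\Q$ lying strictly inside $\R'''$ would also lie inside $\R''$, contradicting the minimality of $\R'''$ within that family. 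Since $\R''' \subseteq \R'' \subsetneq \R'$, the set $\R'''$ is a minimal closed RAF of $\Q$ that is a strict subset of $\R'$, as required.

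Given this, the algorithm is straightforward: use part (v) to compute the list $M_1, \dots, M_k$ of all minimal closed RAFs of $\Q$ (with $k \le |{\rm maxRAF}(\Q)|$) in time polynomial in $|\Q|$; then, for each $i$, test whether $M_i \subsetneq \R'$ (that is, $M_i \subseteq \R'$ and $M_i \ne \R'$); and answer ``yes'' precisely when some test succeeds. Each containment test costs $O(|\R|)$ and there are at most $|\R|$ of them, so the procedure runs in polynomial time, and its correctness is exactly the reduction above.

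I do not expect a genuine obstacle here, since parts (iv) and (v) already carry the structural weight; the one thing to get right is the reduction step --- the observation that it suffices to examine minimal closed RAFs rather than all closed RAFs. That point does matter: by part (iii) the closed RAFs of $\Q$ are the unions of arbitrary subcollections of cores of $\D_{\Q}$ (together with all reactions reachable from them), of which there can be exponentially many, so one really needs the polynomial bound from (v) on the number of \emph{minimal} closed RAFs. As an alternative that sidesteps (v), I could instead test, for each core $C$ of $\D_{\Q}$, whether the closed RAF generated by $C$ via part (iii) (namely $C$ together with all reactions reachable from it) is a strict subset of $\R'$: by (iii) every closed RAF strictly inside $\R'$ contains at least one core $C$, and hence contains the closed RAF generated by $C$, so this variant is correct as well and also runs in polynomial time.
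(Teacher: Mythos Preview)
Your proposal is correct and follows essentially the same approach as the paper: reduce to checking the minimal closed RAFs, invoke part~(v) to bound their number and compute them in polynomial time, and then test each for strict containment in $\R'$. The paper's own proof of (vi) is a single line (``Part~(vi) follows from Part~(v) since $\Q$ contains a closed subRAF if and only if it contains a minimal closed subRAF''), and you have simply spelled out the reduction step more carefully; your alternative via cores is also sound but not needed.
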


\begin{proof}
\mbox{}
A key observation throughout is that in an elementary CRS $\Q$, any nonempty subset $\R'$ of reactions automatically satisfies the $F$--generated property,  so $\R'$ forms a RAF
for $\Q$ if and only if $\R'$ satisfies the reflexively autocatalytic (RA) property. By the manner in which  $\D_{\Q}$ is constructed, the RA property means that the induced subgraph  
$\D_{\Q}|\R'$ has the property that each vertex has in-degree at least 1.

In particular, $\R$ has a RAF if and only if $\D_{\Q}$  has a directed cycle. The `if' direction of this claim is clear.  For the `only if' direction, suppose that $\R'$ is a RAF and $r \in \R'$.  By the assumption that each vertex in $\D_{\Q}$ has in-degree at least 1, there is a directed walk of length $k$ (for any $k \geq 1$) involving vertices in  $\R'$ and  ending in $r$. Since $\R'$ is finite if we take $k> |\R'|$ then two vertices on this directed walk must coincide and the resulting sub-walk between this vertex to itself gives a directed cycle in $\D_{\Q}$. Moreover, $\D_{\Q}|\R'$ contains a directed cycle if and only if this sub-digraph contains a chordless cycle (again, the `if' direction is clear and the `only if' direction follows by the finiteness of $\R$,  so shortening each directed cycle by following a chord leads to a sequence of cycles of decreasing length that eventually terminates on a chordless cycle). 
This establishes  Part (i).

For Part (ii),  a subset  $\R'$ of $\R$ has the property that $\D_{\Q}|\R'$ is a chordless cycle, which implies (by Part (i)) that $\R'$ is a RAF. It is also an irrRAF;  otherwise, the cycle would
have a chord. Conversely, if $\R'$ has the property that $\D_{\Q}|\R'$ is not a chordless cycle, then either $\D_{\Q}$ does not contain a cycle (in which case it is not a RAF) or it contains a cycle
which either has a chord or has other vertices reachable from it, in which case $\R'$ is not an irrRAF. This establishes the first sentence of Part (ii).  The arguments for the second and third sentences follow similar lines. 

For Part (iii),  it is clear that the union of one or more cores is a RAF; however, the resulting set of reactions  is closed if and only if all reactions that are reachable from that set are also included.

For Part (iv), suppose that a core $c'$ is reachable from another core $c$ (by definition, $c$ is not reachable from $c'$). Any closed RAF $\R'$ that contains both $c$ and $c'$ is thus not minimal,  since we could delete $c'$ and all the reactions that are reachable from $c'$ but not from $c$ and obtain a strict subset of $\R'$ that is also a closed RAF. On the other hand, if $\R'$ has the property described in Part (iv), then it is a closed RAF by Part (iii) and it is also minimal, since any closed RAF must contain at least one core, alongside all the reactions that are reachable from it.

Part (v) follows from Part (iv), since each minimal closed RAF is associated with exactly one core, and since cores are strongly connected components of $\D_{\Q}$ these cores are vertex-disjoint (i.e. two cores share no reaction). Consequently, the number of cores is bounded above by the number of reactions in the maxRAF of $\D_{\Q}$.  Moreover,  finding the strongly connected components of any digraph can be done in polynomial time in the size of the digraph \citep{tar}. Each of these strongly connected components  can then be  tested in polynomial time  to determine if it is a core; if  so, one can then determine in polynomial time which other vertices  are reachable from it. Thus the minimal closed RAFs can be listed in polynomial time in the size of $\Q$.

Part (vi) follows from Part (v) since $\Q$ contains a closed subRAF if and only if it contains a minimal closed subRAF.

\end{proof}
\hfill$\Box$

Figs.~\ref{fig2}, \ref{fig3} and \ref{fig4} illustrate Parts (i)--(iv) of Theorem~\ref{bulletthm}.  Some of these examples are based on reaction networks that come from actual experimental RAF sets.
 \begin{figure}[htb]
\centering
\includegraphics[scale=1.0]{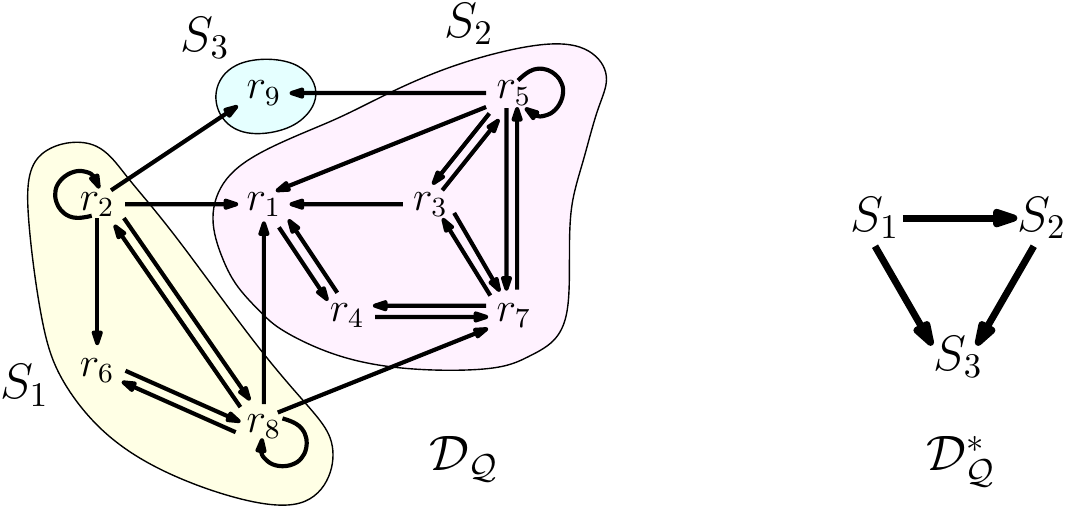}
\caption{The directed graph $\D_{\Q}$ for an elementary CRS  $\Q$ (adapted from an experimental system of \cite{ash})
that has three strongly connected components ($S_1, S_2, S_3$), of which $S_1$ and $S_2$ are cores. The associated (acyclic) condensation digraph $\D^*_{\Q}$ is shown on the right. 
  The unique minimal closed RAF is $S_2 \cup S_3$;  the other closed RAF is the full set itself, namely $S_1 \cup S_2 \cup S_3$. 
The reactions subsets $S_1$, $S_1 \cup S_3$, $S_1 \cup S_2$, $S_1 \cup S_3$, and $S_2$ are all RAFs but not closed RAFs.  A computer-based search finds 305 RAFs altogether. 
There are six chordless cycles in this CRS, which correspond to the six irrRAFs: $\{r_2\}, \{r_5\}, \{r_8\}, \{r_1, r_4\}, \{r_4, r_7\}$ and
$\{r_3, r_7\}$. 
Note that this representation of the CRS is in terms of the molecules produced by reactions that have reactants in the food set. However, each
reaction produces a single (and unique) product so we can identify the product with the reaction in this example.}
\label{fig2}
\end{figure}

 \begin{figure}[htb]
\centering
\includegraphics[scale=0.8]{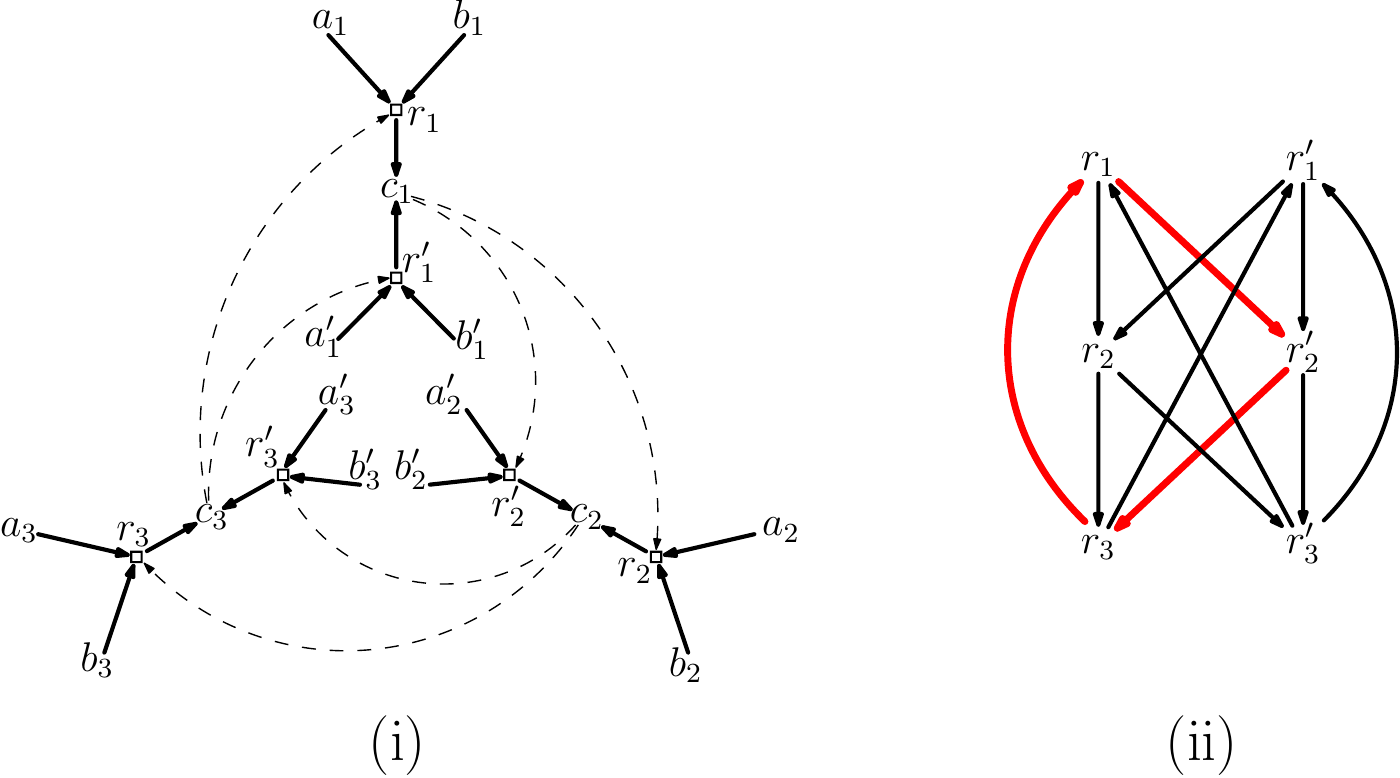}
\caption{(i)  An elementary CRS (with food set $F$ equal to the 12 elements labelled $a_i, a_i', b_i, b_i'$ for $i=1,2,3$) that has eight irrRAFs, each of which has size 3 (this example can be extended to  produce an elementary CRS with $2n$ reactions and $2^n$ irrRAFs \citep{hor12}).  These irrRAFs correspond to the eight chordless cycles in the graph $\D_{\Q}$ shown in (ii), with one of these chordless cycles indicated by the three bold arcs.  None of these irrRAFs is closed. There are 27 RAFs  for $\Q$ in total.}
\label{fig3}
\end{figure}

 \begin{figure}[htb]
\centering
\includegraphics[scale=1.2]{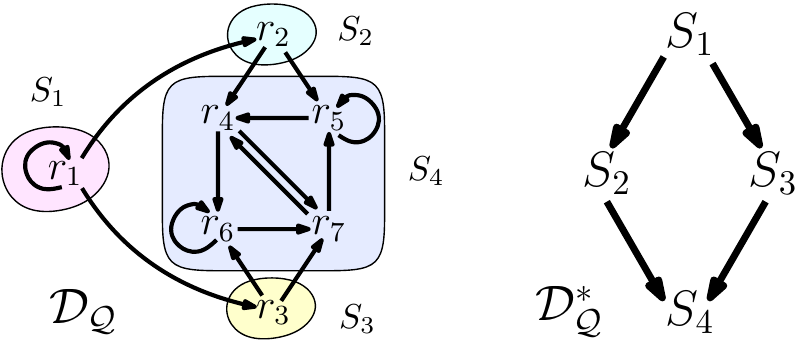}
\caption{The directed graph $\D_{\Q}$ for an elementary CRS (from an experimental system \citep{hor13}),
 shown on the left,  has 67 RAFs and two closed RAFs (the whole set and $\{r_4, r_5, r_6, r_7\}$). The strongly connected components of $\D_{\Q}$ are $\{r_1\}, \{r_2\}, \{r_3\}$ and $\{r_4, r_5, r_6, r_7\}$, two of which are cores (namely, $\{r_1\}$ and $\{r_4, r_5, r_6, r_7 \}$). The associated condensation digraph $\D^*_{\Q}$ is shown on the right. For this RAF, there are four irrRAFs, namely $\{r_1\}, \{r_5\}, \{r_6\}$, and $\{r_4, r_7\}$.}
\label{fig4}
\end{figure}

\newpage

{\bf Remarks:}  

\begin{itemize} 

\item
Parts (ii)--(vi) of  Theorem~\ref{bulletthm} hold even when $\Q$ is not elementary, provided that $\Q' = (X, \R', C, F)$ is elementary where
$\R'$ is the maxRAF of $\Q$.

\item 
Although cores share no reactions in common, it is quite possible for minimal closed RAFs to share
reactions in common. 

\item 
The last sentence of Part (ii) implies that the size of the smallest RAF is equal to the length of the shortest directed cycle in $\D_\Q$ and this can be found in polynomial time in $|\Q|$ (by a depth-first-search or network flow techniques).  This is in contrast to the problem of finding the size of a smallest RAF in a general CRS, which has been shown to be NP-hard in \cite{ste13}. 
 
\item 
An important extension of the RAF concept allows for molecule types to inhibit reactions (as well as being able to catalyse reactions).   For a general CRS $\Q$ it is known that determining whether or not a CRS  $\Q$ has a RAF $\R'$
for which no reaction is inhibited by any molecule produced by $\R'$ is NP-hard \cite{mos}. 
However, for any elementary CRS, Theorem~\ref{bulletthm}(v) provides the following positive result. 

\begin{cor}
When inhibition is also allowed in an elementary CRS $\Q$, it is possible to determine in polynomial time whether $\Q$ contains a closed RAF $\R'$ for which no reaction is inhibited by any molecule type produced by $\R'$.
\end{cor}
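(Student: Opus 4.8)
The plan is to reduce the question to a search over the \emph{minimal} closed RAFs of $\Q$. By Theorem~\ref{bulletthm}(v) there are at most $|{\rm maxRAF}(\Q)|$ of these, and they can all be found and listed in polynomial time in $|\Q|$; so it suffices to show that the property we are after — being a closed RAF no reaction of which is inhibited by a molecule type the RAF produces — can always be witnessed by a minimal closed RAF, and then simply inspect each of them.

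The only direction requiring an argument is: given an arbitrary closed RAF $\R'$ in which no reaction is inhibited by a molecule type produced by $\R'$, one must produce a minimal closed RAF with the same property. First I would pick, among the finitely many closed RAFs of $\Q$ contained in $\R'$, one, say $\R''$, that is inclusion-minimal; then $\R''$ is a minimal closed RAF of $\Q$, since any closed RAF strictly inside $\R''$ would lie inside $\R'$ too, contradicting the choice of $\R''$. It then remains to check that $\R''$ still has the uninhibited property, and this is where monotonicity of molecule production enters: because $\R'' \subseteq \R'$ we have ${\rm cl}_{\R''}(F) \subseteq {\rm cl}_{\R'}(F)$ — in the elementary setting ${\rm cl}_{\R''}(F)$ is just $F$ together with the products of the reactions in $\R''$ — so every molecule type produced by $\R''$ is produced by $\R'$; since no reaction of $\R'$, hence no reaction of $\R'' \subseteq \R'$, is inhibited by any molecule type produced by $\R'$, a fortiori no reaction of $\R''$ is inhibited by a molecule type produced by $\R''$. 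The reverse implication is trivial, as a minimal closed RAF is in particular a closed RAF.

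Given this equivalence the algorithm is immediate: list the minimal closed RAFs $\R''_1,\ldots,\R''_k$ of $\Q$ (with $k \le |{\rm maxRAF}(\Q)|$) via Theorem~\ref{bulletthm}(v); for each $i$ compute ${\rm cl}_{\R''_i}(F)$ and test whether some reaction in $\R''_i$ is inhibited by a molecule type in ${\rm cl}_{\R''_i}(F)$; output ``yes'', returning $\R''_i$, as soon as some $\R''_i$ survives the test, and ``no'' otherwise. Each of the $k = O(|\Q|)$ iterations is clearly polynomial in $|\Q|$. The only real obstacle is the equivalence in the previous paragraph, and even there the substance is just the two facts that ``produced by'' is monotone under passing to a sub-RAF — so shrinking a closed RAF cannot introduce new inhibition — and that the family of minimal closed RAFs is polynomially bounded; the latter is exactly Theorem~\ref{bulletthm}(v), and is precisely what fails for a general (non-elementary) CRS, where the analogous reduction would only land on irrRAFs, of which there can be exponentially many.
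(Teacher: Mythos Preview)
Your proof is correct and follows essentially the same approach as the paper's: reduce to minimal closed RAFs via Theorem~\ref{bulletthm}(v), then check each one. The paper simply asserts the key equivalence (``there is an uninhibited closed RAF iff there is an uninhibited minimal closed RAF'') without justification, whereas you supply the monotonicity argument that makes it work; so your version is, if anything, more complete.
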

\begin{proof}
There is a closed RAF for $\Q$ that has no inhibition if and only if there is a minimal closed RAF for $\Q$ that has no inhibition. By Part (v) of Theorem~\ref{bulletthm}, there are at most $|{\rm maxRAF}(\Q)|$ minimal closed RAFs for an elementary CRS $\Q$, and these can all be checked in polynomial time to determine if any of them have the property that no reaction is inhibited by any molecule type produced by the reactions in the set.
\end{proof}

\item   Part (v) of Theorem~\ref{bulletthm} raises the question of whether this result might apply with the restriction that $\Q$ is elementary. In other words, is the number of minimal closed RAFs in a (general, nonelementary) CRS bounded polynomially in the size of $\Q$?   The answer turns out to be `no' as the following example shows.

Consider the CRS $\Q_k: = (X, \R, C, F)$ where
$$X= \{f, x, \theta\} \cup \{x_1, x'_1, \ldots, x_k, x'_k\} \cup \{\theta_1, \ldots, \theta_k\}, F=\{f\},$$ and  for  $[k] = \{1,2, \ldots, k\}$, the reaction set is:
$$\R = \{r_x, r_\theta\} \cup \{r_i: i \in [k]\} \cup \{r'_i, i \in [k]\}  \cup \{\overline{r_i}:i \in [k]\} \cup \{\overline{r'_i}: i \in [k]\},$$
where these reactions are described as follows (with catalysts indicated above the arrows):  
$$r_x: f \xrightarrow{\theta} x,$$
$$r_\theta: \theta_1+\theta_2 + \cdots + \theta_k \xrightarrow{\theta} \theta,$$
and for all $i \in [k]$:
$$r_i: x \xrightarrow{x_i} x_i, \mbox{ } r'_i: x \xrightarrow{x'_i} x'_i,$$
$$\overline{r_i}: x_i \xrightarrow{\theta_i} \theta_i, \mbox{ }  \overline{r'_i}: x'_i \xrightarrow{\theta_i} \theta_i.$$
Thus, $\Q_k$ has a food set of size 1, a reaction set of size $4k+2$, and $3k+3$ molecule types.
Fig.~\ref{fig5} provides a graphical representation of $\Q_3$.

\begin{figure}[htb]
\centering
\includegraphics[scale=0.8]{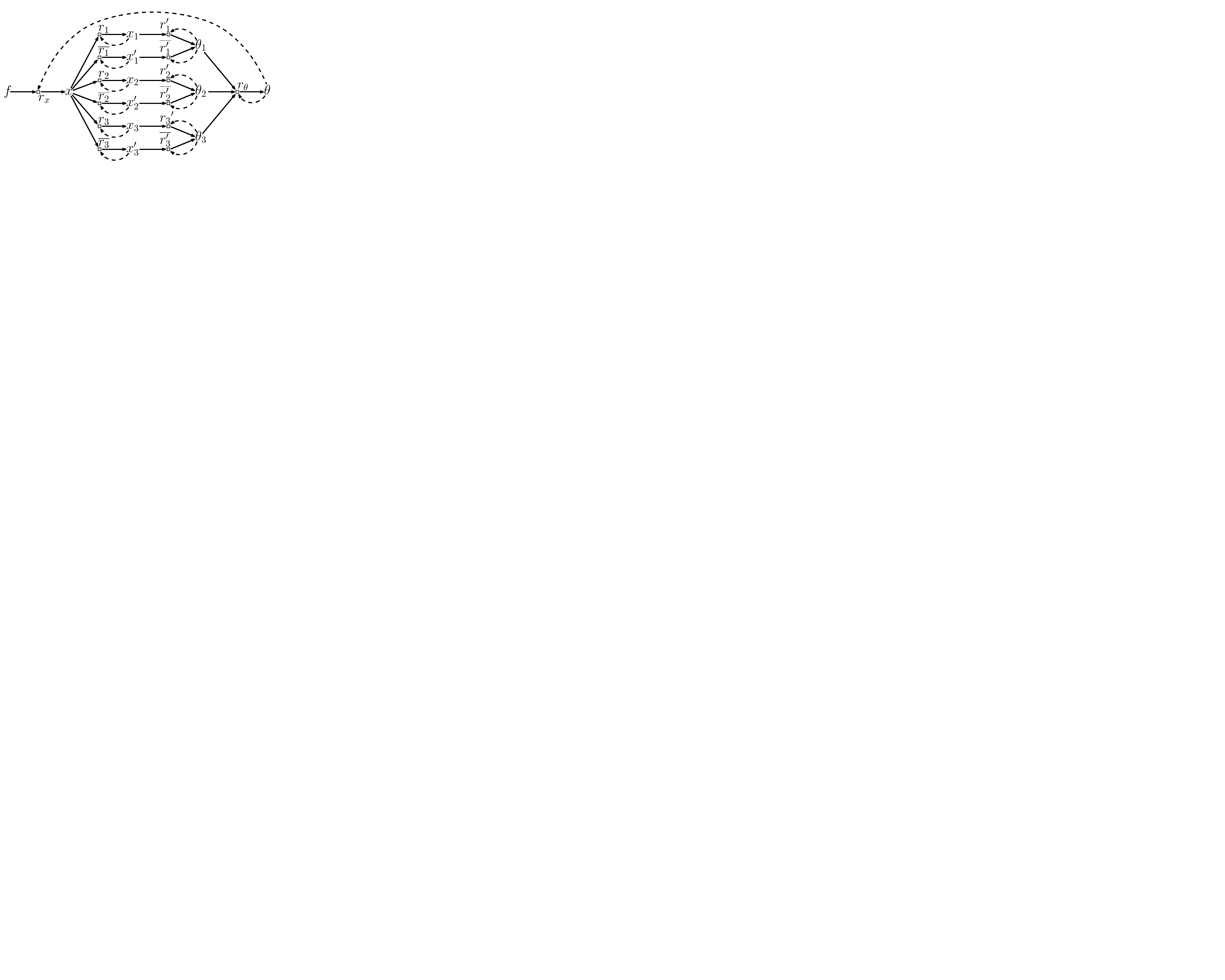}
\caption{The CRS $\Q_3$.}
\label{fig5}
\end{figure}

\begin{prop}
\label{thetapro}
The minimal closed RAFs of $\Q_k$ coincide with the irrRAFs for $\Q_k$, and there are $2^k$ of them. More precisely, $\R'$ is a minimal closed RAF of $\Q_k$ if and only if 
$\R'$ contains $r_x$ and  $r_\theta$ and for each $i \in [k]$, $\R'$  contains either
(i) $r_i$ and $\overline{r_i}$ but neither $r'_i$ nor $\overline{r'_i}$, or
(ii) $r'_i$ and $\overline{r'_i}$ but neither $r_i$ nor $\overline{r_i}$.
\end{prop}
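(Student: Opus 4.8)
The plan is to reduce the proposition to one structural lemma identifying which reactions any RAF of $\Q_k$ is forced to contain, and then to read off everything else by a direct closure computation (note that $\Q_k$ is \emph{not} elementary, so Theorem~\ref{bulletthm} does not apply and these computations must be carried out by hand).

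\textbf{The lemma.} Every nonempty RAF $\R'$ of $\Q_k$ contains both $r_x$ and $r_\theta$, and, for each $i\in[k]$, contains $\{r_i,\overline{r_i}\}$ or $\{r'_i,\overline{r'_i}\}$. The proof is a chase through the dependency structure: in $\Q_k$ every non-food molecule has a \emph{unique} producing reaction, except each $\theta_i$, which is produced exactly by $\overline{r_i}$ and $\overline{r'_i}$. Starting from any $r\in\R'$ and using that condition (F) requires every reactant of $r$ to lie in ${\rm cl}_{\R'}(F)$, one is forced down the chain ($\overline{r_i}$ or $\overline{r'_i}$) $\to$ ($r_i$ or $r'_i$) $\to r_x$, so $r_x\in\R'$; then (RA) for $r_x$ needs its only catalyst $\theta$, whose only producer is $r_\theta$, so $r_\theta\in\R'$; and then (F) for $r_\theta$ needs every $\theta_i$ in ${\rm cl}_{\R'}(F)$, forcing $\overline{r_i}\in\R'$ (hence $r_i\in\R'$, since $x_i$ is then needed) or $\overline{r'_i}\in\R'$ (hence $r'_i\in\R'$) for each $i$. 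This bookkeeping is the only real obstacle; once it is in place the rest is routine.

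\textbf{From the lemma to the statement.} For $S\subseteq[k]$ write $\R'_S=\{r_x,r_\theta\}\cup\{r_i,\overline{r_i}:i\in S\}\cup\{r'_i,\overline{r'_i}:i\notin S\}$. I would then (a) check $\R'_S$ is a RAF by computing ${\rm cl}_{\R'_S}(F)=\{f,x,\theta\}\cup\{x_i,\theta_i:i\in S\}\cup\{x'_i,\theta_i:i\notin S\}$ and observing that every reactant and every catalyst of every reaction of $\R'_S$ lies in it (using the earlier remark that under (F) the (RA) condition is equivalent to having a catalyst in the closure); (b) note, by the lemma, that every RAF contains some $\R'_S$, and that $\R'_{S'}\subseteq\R'_S$ forces $S'=S$ (compare whether $r_i$ or $r'_i$ occurs), so the sets $\R'_S$ are pairwise incomparable and are exactly the irrRAFs, of which there are $2^k$; (c) check each $\R'_S$ is closed --- its producible set ${\rm cl}_{\R'_S}(F)$ omits $x'_i$ for $i\in S$ and $x_i$ for $i\notin S$, so none of the reactions outside $\R'_S$ has its reactant (for $\overline{r'_i},\overline{r_i}$) or its catalyst (for $r'_i,r_i$) available, hence the closure adds nothing. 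Since a closed irrRAF is a minimal closed RAF, each $\R'_S$ is one; conversely any minimal closed RAF $\R'$ contains some $\R'_S$ by the lemma, and since $\R'_S$ is itself a closed RAF, minimality gives $\R'=\R'_S$. This yields both the characterisation and the count $2^k$.
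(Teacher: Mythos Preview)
Your proof is correct and follows essentially the same approach as the paper's: both first establish the structural lemma that any RAF of $\Q_k$ must contain $r_x$, $r_\theta$, and for each $i$ at least one of the pairs $\{r_i,\overline{r_i}\}$ or $\{r'_i,\overline{r'_i}\}$, and then verify that the $2^k$ candidate sets are closed irrRAFs (hence minimal closed RAFs). The paper's argument is considerably terser---it simply asserts that these sets are ``easily seen'' to be closed irrRAFs and that the structural condition characterises RAFs---whereas you spell out the closure computation and the incomparability argument explicitly; but the underlying strategy is identical.
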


\begin{proof}
The `if' direction in the second sentence is clear, since any such $\R'$ is easily seen to be a closed subRAF, as well as being an irrRAF, and thus is a minimal closed RAF.  For the `only if' direction, 
a subset $\R'$ of $\R$ is a RAF of $\Q_k$ precisely if the following two properties hold:  (a) $\R'$ contains $r_x$ and $r_\theta$, and (b) for each $i$, $\R'$ contains either $r_i$ and $\overline{r_i}$ or  $r'_i$ and $\overline{r'_i}$ (in order to generate $\theta_i$, which is required by $r_\theta$). Unless $\R'$ satisfies the stronger condition (i) or (ii)  (for each $i \in [k]$) listed in the statement of  Proposition~\ref{thetapro}, $\R'$ is not minimal.
\end{proof}
\end{itemize}

Another question that Part (v) of Theorem~\ref{bulletthm} suggests is the following: does an elementary CRS always have at most a polynomial number of closed RAFs?  Again, the answer is `no', and the construction to show this is much simpler than the previous example. 
Consider the elementary CRS with 
$F=\{f_1, \ldots, f_n\}, \mbox{ }  X= F \cup \{x_1, \ldots, x_n\},$
together with the set $\R$ of $k$ catalysed reactions
$r_i: f_i \xrightarrow{x_i} x_i$
for $i=1, \ldots, k$.  This CRS  has $2^k-1$ closed RAFs, one for each nonempty subset of $\R$.

\subsection{The probability of a RAF in an elementary CRS}

Given an elementary CRS $\Q$, suppose that catalysis is assigned randomly as follows: each molecule type catalyses each given reaction in $\R$ with a fixed probability $p$, independently across all pairs $(x,r)$  of 
molecule type $x$ and reaction $r$.  The  probability $p_\Q$ that $\Q$ has a RAF is simply the probability that $\D_{\Q}$ has a directed cycle (by Theorem~\ref{bulletthm}(i)).

In the case where  each reaction in $\R$ has just a single product, then the  asymptotic behaviour of $p_\Q$ as $|\R| \rightarrow \infty$ is equivalent to the emergence of a directed cycle in a large random directed graph, which has been previously studied in the random graph literature by \cite{bol}.

Here we provide a simple lower bound on $p_\Q$. Let $\lambda = p|\R|$ be the expected number of reactions that each molecule type catalyses.  
The following result gives a lower bound on $p_\Q$ that depends only on $\lambda$ and which converges towards 1 as $\lambda$ grows.
\begin{prop}
$p_\Q \geq 1-\left(1-\frac{\lambda}{|\R|}\right)^{|\R|} \sim 1- e^{-\lambda}$, where $\sim$ denotes asymptotic equality as $|\R|$ grows.
\end{prop}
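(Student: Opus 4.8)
The plan is to find a single molecule type whose catalytic behaviour alone forces a directed cycle in $\D_\Q$, and lower-bound the probability that such a molecule exists. Recall from Theorem~\ref{bulletthm}(i) that $p_\Q$ equals the probability that $\D_\Q$ contains a directed cycle. The simplest sufficient event I would exploit is: there exists a reaction $r$ that is catalysed by one of its own products. In that case $\D_\Q$ has a loop at $r$, hence a directed cycle, hence $\Q$ has a RAF. (If each reaction has a single product, this is exactly the event that $r$ catalyses itself; with multiple products one just needs one of them.)

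First I would fix a single molecule type $x$ and observe that the event ``$x$ catalyses at least one reaction $r$ such that $x$ is a product of $r$'' has a clean probability: among the (at most, or exactly, depending on the product structure) some set of reactions that produce $x$, each is catalysed by $x$ independently with probability $p$. If we are in the one-product-per-reaction case, write $R_x$ for the set of reactions whose product is $x$; then $x$ fails to catalyse any reaction in $R_x$ with probability $(1-p)^{|R_x|}$. Summing $|R_x|$ over all $x$ gives exactly $|\R|$, since every reaction contributes its unique product to exactly one such set. So if I let $E$ be the event ``no molecule type catalyses a reaction producing it'', then by independence across the disjoint pair-sets, $\PP(E) = \prod_x (1-p)^{|R_x|} = (1-p)^{\sum_x |R_x|} = (1-p)^{|\R|} = (1 - \lambda/|\R|)^{|\R|}$. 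Since $\overline{E}$ implies $\D_\Q$ has a loop, $p_\Q \geq \PP(\overline{E}) = 1 - (1-\lambda/|\R|)^{|\R|}$.

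The asymptotic statement $1-(1-\lambda/|\R|)^{|\R|} \sim 1 - e^{-\lambda}$ is then just the standard limit $(1-c/n)^n \to e^{-c}$ as $n\to\infty$ with $c=\lambda$ fixed, which I would state in one line. For the general (multi-product) case, the only adjustment is that a reaction with several products $x_1,\dots,x_m$ could be counted once in several of the sets $R_{x_j}$; but the relevant event is still ``$x$ catalyses some $r$ with $x$ among the products of $r$'', and the independence argument goes through verbatim after noting that the events indexed by distinct molecule-reaction pairs $(x,r)$ are independent — the only subtlety is that $\sum_x |R_x|$ may now exceed $|\R|$, which only helps the bound, so $\PP(E) \le (1-p)^{|\R|}$ still holds.

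The main obstacle — really the only place any care is needed — is making sure the events being multiplied are genuinely over \emph{distinct} pairs $(x,r)\in X\times\R$, so that the independence hypothesis in the random-catalysis model applies directly; once the disjointness/counting bookkeeping is set up correctly the rest is a one-line computation and the standard exponential limit. I would also remark that this bound is typically far from tight, since it ignores all longer cycles in $\D_\Q$, which is why the paper phrases it as a ``simple lower bound.''
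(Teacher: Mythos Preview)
Your proposal is correct and follows essentially the same idea as the paper: both lower-bound $p_\Q$ by the probability that $\D_\Q$ contains at least one self-loop, and compute this via independence over the pairs $(x,r)$ with $x$ a product of $r$. The only cosmetic difference is that you organise the count by molecule type whereas the paper organises it by reaction; the event, the independence argument, and the resulting bound are the same.
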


\begin{proof}
Consider the probability $p_r$ that a single reaction $r$ has an arc to itself (such an event is sufficient but not necessary for $\D_{\Q}$ to contain a directed cycle).   If $r$ produces $m\geq 1$ products, we have $p_r = 1- (1-p)^m \geq pm \geq p = \lambda/|\R|.$
The probability that no reaction has an arc to itself is therefore $\left(1-\frac{\lambda}{|\R|}\right)^{|\R|}$. Since $(1-x/n)^n \sim e^{-x}$, we obtain the result claimed.
\end{proof}

\subsection{Eigenvector analysis}
A previous study by \cite{jai98} considered the dynamical aspects of an `autocatalytic set'  in a CRS, which is closely related to the notion of an RAF  (our graph $\D_\Q$ differs from theirs in two respects, firstly the vertices here represent reactions rather than molecule types, and we also permit self-loops from a reaction to itself). 
We now present the analogues of these earlier dynamical findings in our setting  (and formally, with proofs). 

Given an elementary CRS $\Q$, let $A_\Q$ denote the adjacency matrix of the directed graph $\D_{\Q}$.  Thus the rows and columns of $A_\Q$ are indexed by the reactions in 
$\R$ in some given order, and the entry of $A_Q$ corresponding to the pair $(r, r')$ is 1 precisely if $(r,r')$ is an arc of $\D_{\Q}$ and is zero otherwise.
By Perron-Frobenius theory for non-negative matrices, $A_Q$ has a non-negative real eigenvalue $\lambda$ of maximal modulus (amongst all the eigenvalues) and if
$\D_{\Q}$ is strongly-connected (i.e. $A_Q$ is irreducible), then $A_Q$ has a left (and a right) eigenvector with eigenvalue $\lambda$ whose components are all positive. 

The following results are analogues of the former study by \cite{jai98} to our setting.

\begin{prop}
\label{jaiprop}
\mbox{}
\begin{itemize}
\item[(i)]
If $\Q$ contains no RAF, then $\lambda=0$. 
\item[(ii)]
If $\Q$ contains a RAF, then $\lambda \geq 1$.
\item[(iii)]
If $A_Q$ has an eigenvalue $>0$  with an associated left eigenvector $w$, then the set of reactions $r$ for which $w_r>0$ forms a RAF for $\Q$.
\end{itemize}
\end{prop}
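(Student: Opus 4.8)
The plan is to connect the Perron--Frobenius eigenvalue of the $0/1$ adjacency matrix $A_\Q$ to the combinatorial characterisation of RAFs given in Theorem~\ref{bulletthm}(i), namely that $\R'$ is a RAF if and only if every vertex of $\D_\Q|\R'$ has in-degree at least $1$, and that $\Q$ has a RAF if and only if $\D_\Q$ has a directed cycle. For part (i), I would argue that if $\Q$ has no RAF then $\D_\Q$ is acyclic, so $A_\Q$ is permutation-similar to a strictly upper-triangular matrix and hence nilpotent; a nilpotent matrix has all eigenvalues equal to $0$, so in particular $\lambda=0$. For part (ii), if $\Q$ has a RAF then $\D_\Q$ contains a directed cycle, say of length $\ell$, supported on reactions $r_1,\dots,r_\ell$; restricting $A_\Q$ to the rows/columns indexed by this cycle gives a matrix $B$ that dominates (entrywise) the $\ell\times\ell$ cyclic permutation matrix, which has spectral radius exactly $1$. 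By monotonicity of the Perron--Frobenius spectral radius for non-negative matrices (and the fact that the spectral radius of a principal submatrix of a non-negative matrix is at most that of the whole matrix), we get $\lambda = \rho(A_\Q) \geq \rho(B) \geq 1$.

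For part (iii), let $w$ be a left eigenvector of $A_\Q$ for an eigenvalue $\mu > 0$, and set $\R' = \{r \in \R : w_r > 0\}$. I first observe that $w$ may be taken non-negative: restrict attention to a strongly connected component $S$ realising $\rho(A_\Q)$ (or more carefully, decompose along the condensation $\D^*_\Q$ and use that a positive eigenvalue forces a non-negative Perron eigenvector on some ``final'' strongly connected block), so $\R'$ is nonempty. The eigenvector equation $wA_\Q = \mu w$ reads, for each reaction $r'$, $\mu w_{r'} = \sum_{r:\,(r,r') \in \D_\Q} w_r$. If $r' \in \R'$ then the left side is strictly positive, so the sum on the right is strictly positive, which forces some $r$ with $(r,r') \in \D_\Q$ and $w_r > 0$, i.e.\ some $r \in \R'$ with an arc into $r'$. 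Hence every vertex of $\D_\Q|\R'$ has in-degree at least $1$ within $\R'$, so by Theorem~\ref{bulletthm}(i), $\R'$ is a RAF.

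The one genuinely delicate point is the sign issue in part (iii): a left eigenvector for a positive eigenvalue of a reducible non-negative matrix need not be non-negative (e.g.\ an eigenvalue coming from a non-Perron block can have a sign-changing eigenvector). The clean fix is to work with $|w|$: from $\mu w_{r'} = \sum_{r}(A_\Q)_{rr'} w_r$ and the triangle inequality we get $\mu |w_{r'}| \le \sum_r (A_\Q)_{rr'} |w_{r'}|$ only in the wrong direction, so instead I would argue directly on the support. Set $U = \{r : w_r \neq 0\}$ and note $U \supseteq \R'$; actually the argument above already works verbatim if we replace ``$w_r > 0$'' by ``$w_r \neq 0$'' provided we can guarantee no cancellation — which we cannot in general. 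So the robust route is: pass to a strongly connected component $S$ on which the induced submatrix $A_\Q|S$ has spectral radius $\mu$ (such $S$ exists because $\mu$ is an eigenvalue of the reducible matrix $A_\Q$, hence of one of its diagonal blocks in the Frobenius normal form), apply Perron--Frobenius to $A_\Q|S$ to get a genuinely positive eigenvector there, observe $S$ is then a core (as $\mu > 0$ means $S$ is not a single reaction without a self-loop), hence a RAF, and finally note that the full support $\R'$ of the original $w$ contains such a core together with reactions reachable from it, so $\R'$ is a RAF by Theorem~\ref{bulletthm}(i) as well. I would present part (iii) via this component-based argument to sidestep the sign subtlety, which is where the main work lies.
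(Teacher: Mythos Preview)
Your arguments for (i) and (ii) are correct. Part (i) is exactly the paper's proof. For (ii) the paper instead exhibits the indicator vector $w$ of a chordless cycle and asserts it is an eigenvector of $A_\Q$ with eigenvalue $1$; your route via monotonicity of the Perron spectral radius under principal submatrices of non-negative matrices is a clean alternative (and in fact sidesteps a subtlety in the paper's version, since that indicator vector fails to be a left eigenvector whenever there is an arc leaving the cycle, and fails to be a right eigenvector whenever there is an arc entering it).

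For (iii), your first paragraph already contains a complete and correct proof, and it is exactly the paper's argument: for $r'\in\R'$ the equation
\[
\sum_{r} (A_\Q)_{rr'}\, w_r \;=\; \mu\, w_{r'} \;>\; 0
\]
forces some summand to be strictly positive. Your subsequent worry about cancellation is misplaced. A real sum that is strictly positive must have at least one strictly positive term; since $(A_\Q)_{rr'}\in\{0,1\}$, that term satisfies $(A_\Q)_{rr'}=1$ and $w_r>0$, so $r\in\R'$ and there is an arc $r\to r'$ in $\D_\Q|\R'$. No global sign hypothesis on $w$ is needed, and the detour through the Frobenius normal form and strongly connected components is unnecessary. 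That detour also drifts off target: by the end you are arguing about the full support $\{r:w_r\neq 0\}$ rather than $\{r:w_r>0\}$, which is what the proposition actually concerns, and the claim that this support consists of a core together with vertices reachable from it would itself need justification.

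The only genuine residual point (which the paper also passes over) is that $\R'=\{r:w_r>0\}$ could be empty if $w\le 0$ entrywise; replacing $w$ by $-w$ disposes of this in one line.
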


\begin{proof}
Part (i) follows from Part (i) of Theorem~\ref{bulletthm}, combined with the fact that the adjacency matrix $A$ of an acyclic directed graph is nilpotent (i.e. specifically, $A^{l+1}$ is the all-zero matrix when $l$ is the length of a longest path in the directed graph) and thus all the eigenvalues of $A$ are equal to zero \citep{cv}. 
For Part (ii), if $\Q$ contains a RAF, then $\D_{\Q}$ has a minimal (chordless) directed cycle (which could just be a loop on a vertex). Let $w$ be the vector that has value 1 for each vertex in this minimal directed cycle and is zero otherwise. Then $w$ is both a left and right eigenvector for $A_Q$ with eigenvalue 1. 
For Part (iii), let $\R'=\{r \in \R: w_r>0\}$.  The condition $wA_Q = \lambda w$ translates as $\sum_{r \in \R} w_{r} A_{rr'} = \lambda w_{r'}$. Since the right-hand side is non-zero for each reaction $r'\in \R'$, it follows that $w_{r}A_{rr'} \neq 0$ for at least one reaction $r \in \R'$; In other words, each reaction is $\R'$ is catalysed by the product of at least one reaction in $\R'$.
Since $\Q$ is elementary, this implies that $\R'$ is a RAF.
\end{proof}      

To illustrate an application of Proposition~\ref{jaiprop}, consider the system of 9 reactions from Fig.~\ref{fig2}.  In this case,
$\lambda \geq 1$ since the system contains a RAF ({\em cf.} Proposition~\ref{jaiprop}(ii)).  Regarding Part (iii),  three of the eigenvalues of $\A_\Q$ are strictly positive, and for the three corresponding left eigenvectors, one   has  strictly positive entries for the three reactions $r_2, r_6, r_8$, which  form the subRAF $S_1$ shown in Fig.~\ref{fig2}. A second left eigenvector has strictly positive entries for the reactions $r_1, r_3, r_4, r_5, r_7, r_9$, and these form the minimal closed subRAF $S_2 \cup S_3$ shown in Fig.~\ref{fig2}. The third left eigenvector has strictly positive entries for the reactions $r_1, r_4, r_7$ which forms
a subRAF of $S_2$. 

We end this section by noting that Part (iii) of Proposition~\ref{jaiprop} does not hold if left eigenvectors are substituted for right ones. A counterexample is given by the elementary CRS for which $A_\Q$ is the $2 \times 2$ matrix with both rows equal to $[0,1]$; in this case, $A_\Q$ has a principal eigenvalue  of $+1$ but the associated right eigenvector is a column vector with strictly positive entries, but this does not correspond to a RAF for $\Q$.

\section{Generative RAFs}

We now introduce a new notion which describes  how simple RAFs can develop into more complex ones in a progressive way.  This section will build on, and apply the results concerning elementary CRSs,  particularly Theorem~\ref{bulletthm}.

Given a CRS  $\Q = (X, \R, C, F)$ and a subset $Y$ of $X$ containing $F$, let $\R|Y$ be the subset of reactions in $\R$ that have all their reactants in $Y$, and let
$$\Q|Y:=(X, \R|Y, C, Y).$$
In other words,  $\Q|Y$ is the CRS obtained from $\Q$ by deleting each reaction  from $\R$  that does not have all its reactants in $Y$, and by expanding the food set to include all of $Y$.

\bigskip

{\bf Definition (genRAFs):} Given a CRS $\Q=(X,\R, C, F)$, we say that a RAF $\R'$ for $\Q$ is a  {\em genRAF} (or {\em generative RAF})  if there is a 
 sequence  $\R_1, \R_2, \ldots, \R_k$ of subsets of $\R$ with $\R_k = \R'$ and that satisfy the following properties:
\begin{itemize}
\item[(i)] $\R_{1}$ is the closure in $\Q$ of a RAF of $\Q|F$;
\item[(ii)] for each $i>1$, $\R_{i}$ is the closure in $\Q$ of a RAF of $\Q|Y_i$ where $Y_i=F \cup \pi(\R_{i-1})$, and where $\pi(\R_{i-1})$ refers to all molecule types
that are produced by a reaction from $\R_{i-1}$. 
\end{itemize}

\bigskip

Thus, a genRAF is any RAF for $\Q$ that can be formed by taking $\R_1$ to be the closure (within $\Q$)  of a RAF within the elementary CRS $\Q|F$, and for each $i>1$, adding the products of $\R_{i-1}$ to the food set $F$ of $\Q$
and taking $\R_i$ to be the closure (within $\Q$) of the  the resulting (induced) elementary CRS. In other words, the next closed RAF in the sequence is built upon an enlarged food set
generated by the previous closed RAFs in the sequence and considering just those reactions that use this enlarged food set as reactants, and then forming the closure of this set in $\Q$.

As an example, the CRS in Fig.~\ref{fig6}(a) is itself a genRAF (but not a CAF), as it has the generating sequence $\R_1, \R_2$ where $\R_1=\{r_1, r_2,\}$ and $\R_2 = \{r_1, r_2, r_3\}$,  however the CRS in Fig.~\ref{fig6}(b) is not a genRAF (even though it is an RAF).  

\begin{figure}[htb]
\centering
\includegraphics[scale=0.9]{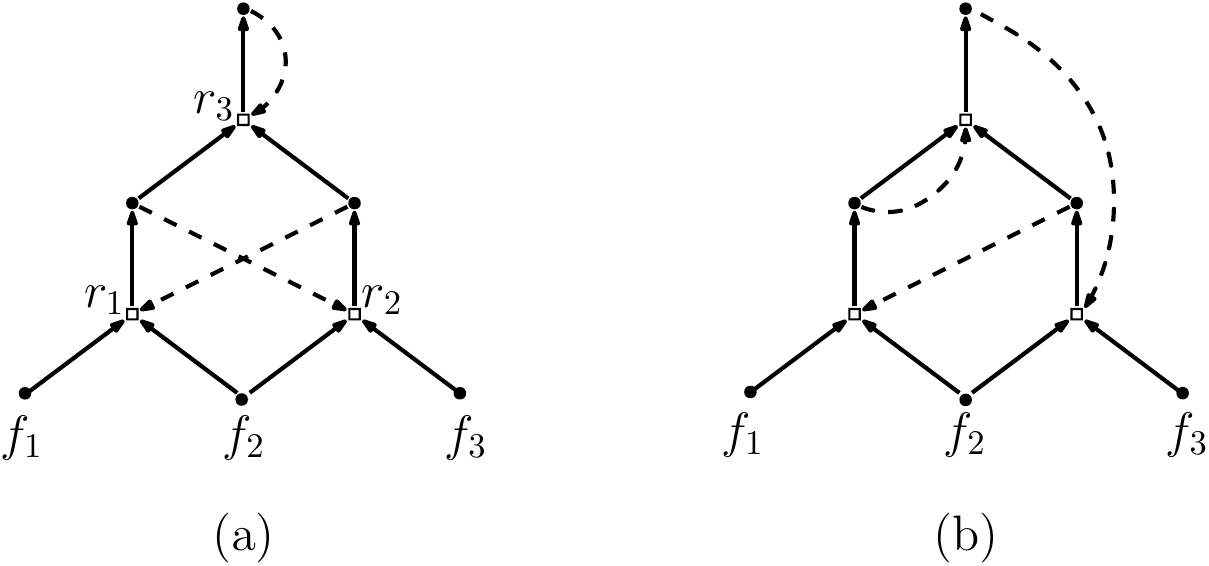}
\caption{(a): This CRS that is a genRAF (a generating sequence starts with the elementary closed RAF $\{r_1, r_2\}$,  and then adds $r_3$). (b): A different pattern of catalysis converts the three reactions into a RAF that is no longer a genRAF. In both cases, the food set is  $F=\{f_1, f_2, f_3\}$.}
\label{fig6}
\end{figure}

The motivation for considering the notion of  genRAFs is  two-fold.  Firstly a genRAF can be built up from simpler RAFs (starting with an elementary one)
by generating the required catalysts at each step (i.e. some reactions may still need to proceed initially uncatalysed, but a catalyst for the reaction will be generated by some other
reaction by the end of the same step).  This avoids the possibility of long chains of reactions that need to proceed uncatalysed until a catalyst for the very first link in the chain is produced, which seems less
biochemically plausible.  A second motivation for considering genRAFs is that they combine two further desirable properties: namely an emphasis on RAFs that are closed
(i.e. all reactions that are able to proceed and for which a catalyst is available will proceed), and genRAFs are sufficiently well-structured that some questions can be answered in polynomial time that are problematic for general RAFs (Theorem~\ref{secondmain}(iv) provides an explicit example). 

We will call the sequence $\R_1, \R_2, \ldots, \R_k$  in the above definition a {\em generating sequence} for $\R'$. 
We now make two observations, that are formalized in the following lemma.

\begin{lemma}
\label{nesting}
Suppose that a genRAF $\R'$ has generating sequence $\R_1, \R_2, \ldots, \R_k$. Then:
\begin{itemize}
\item[(i)] $\R'$ and each set in its generating sequence is a closed RAF for $\Q$. 
\item[(ii)]
$\R_i$ is a nested increasing sequence  (i.e.  $\R_i \subseteq \R_{i+1}$ for each $i \in \{1, \ldots, k-1\}$).  
\end{itemize}
\end{lemma}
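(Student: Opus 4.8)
The plan is to establish (i) and (ii) together, by a single induction on $i$, proving for each $i\in\{1,\ldots,k\}$ the statement ``$\R_i$ is a closed RAF for $\Q$, and if $i\geq 2$ then $\R_{i-1}\subseteq\R_i$.'' The one ingredient that drives the induction is the elementary identity ${\rm cl}_{\R'}(F)=F\cup\pi(\R')$, valid for \emph{every} RAF $\R'$ of $\Q$: the inclusion $\supseteq$ holds because the $(F)$ property puts every reactant of every reaction of $\R'$ into ${\rm cl}_{\R'}(F)$, so each such reaction fires and contributes its products; the inclusion $\subseteq$ holds because $F\cup\pi(\R')$ is already closed under the reactions of $\R'$. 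I will also use that ${\rm cl}_{(\cdot)}(F)$ and the reaction-closure are monotone in the reaction set, and that the closure of a RAF is again a RAF (hence a closed RAF).

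Base case $i=1$: since $\Q|F$ is elementary, a RAF $\R_1'$ of $\Q|F$ automatically satisfies $(F)$ inside $\Q$ as well (all its reactants lie in $F$), while its $(RA)$ condition is unchanged, so $\R_1'$ is a RAF for $\Q$; hence $\R_1$, being the closure of $\R_1'$, is a closed RAF for $\Q$.

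Inductive step $i\geq 2$: assume $\R_{i-1}$ is a closed RAF for $\Q$, and let $\R_i'$ be the RAF of $\Q|Y_i$ of which $\R_i$ is the closure (formed with $Y_i$ as the food set, the products of $\R_{i-1}$ having been adjoined to $F$). By the identity above, $Y_i=F\cup\pi(\R_{i-1})={\rm cl}_{\R_{i-1}}(F)$. Part (ii) is then immediate: every $r\in\R_{i-1}$ has, by the $(F)$ and $(RA)$ properties of the RAF $\R_{i-1}$, all of its reactants in ${\rm cl}_{\R_{i-1}}(F)=Y_i$ and at least one catalyst in $F\cup\pi(\R_{i-1})=Y_i$, so $r$ satisfies the closure-triggering condition from the outset and is forced into $\R_i$; thus $\R_{i-1}\subseteq\R_i$. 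For part (i): $\R_i$ is a closure, hence closed, and since $F\subseteq Y_i$ it is in particular closed relative to $F$, so it remains only to check that $\R_i$ is a RAF for $\Q$ relative to the \emph{original} food set $F$. From $\R_{i-1}\subseteq\R_i$ and monotonicity, ${\rm cl}_{\R_i}(F)\supseteq{\rm cl}_{\R_{i-1}}(F)=Y_i$; tracing the reactions of $\R_i'$ (whose reactants lie in $Y_i$) and then the further reactions adjoined by the $Y_i$-closure, a short induction along these steps shows ${\rm cl}_{\R_i}(F)=F\cup\pi(\R_i)$, that every reactant of every reaction of $\R_i$ lies in this set (property $(F)$), and that every reaction of $\R_i$ has a catalyst in $Y_i\cup\pi(\R_i)=F\cup\pi(\R_i)$ (property $(RA)$). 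Hence $\R_i$ is a closed RAF for $\Q$, completing the induction; taking $i=k$ gives the claim for $\R'=\R_k$.

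The point requiring care — and the reason (i) and (ii) must be carried through the same induction rather than proved separately — is the mismatch between the two food sets: $\R_i$ is manifestly a closed RAF \emph{with respect to the enlarged food set} $Y_i$, but upgrading this to a RAF with respect to the original $F$ requires knowing that everything in $Y_i\setminus F=\pi(\R_{i-1})\setminus F$ is in fact produced from within $\R_i$ — which is precisely the inclusion $\R_{i-1}\subseteq\R_i$. Once that is in hand, the rest is the identity ${\rm cl}_{\R'}(F)=F\cup\pi(\R')$ plus routine monotonicity bookkeeping.
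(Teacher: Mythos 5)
Your overall strategy coincides with the paper's: part (ii) is proved by induction, using the (F) and (RA) properties of $\R_{i-1}$ to argue that each of its reactions is picked up by the closure defining $\R_i$, and part (i) rests on each $\R_i$ being a closure of a RAF. You are in fact more careful than the paper on part (i): the paper dismisses it with ``each set in the generating sequence is the closure of a subRAF of $\Q$'', whereas you correctly observe that a RAF of $\Q|Y_i$ is a priori only a RAF relative to the enlarged food set $Y_i$, and that promoting it to a RAF relative to $F$ requires exactly the inclusion $\R_{i-1}\subseteq\R_i$ together with the identity ${\rm cl}_{\R'}(F)=F\cup\pi(\R')$. That identity, its two-line proof, and the monotonicity bookkeeping are all correct.

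The soft spot is the sentence ``$r$ satisfies the closure-triggering condition from the outset and is forced into $\R_i$''. The closure of $\R_i'$ \emph{in $\Q$} adds a reaction $r$ only when its reactants and a catalyst lie in $F\cup\pi(\R'')$ for the growing set $\R''\supseteq\R_i'$ --- not when they lie in $Y_i=F\cup\pi(\R_{i-1})$. Since the RAF $\R_i'$ of $\Q|Y_i$ whose closure defines $\R_i$ is arbitrary and need not bear any relation to $\R_{i-1}$, the molecules of $Y_i\setminus F$ that $r$ needs may never appear as products of the closure set, and then $r$ is never triggered. Concretely, take $F=\{f\}$ with $r_1: f\xrightarrow{a} a$, $r_2: f\xrightarrow{a} b$, $r_3: f\xrightarrow{d} c$, $r_4: f\xrightarrow{c} d$; then $\R_1=\overline{\{r_1\}}=\{r_1,r_2\}$, while $\{r_3,r_4\}$ is a RAF of $\Q|Y_2$ whose closure in $\Q$ is $\{r_3,r_4\}$, a closed RAF of $\Q$ not containing $\R_1$. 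To be fair, the paper's own proof makes the identical leap (``the reactants and at least one catalyst \dots are available in the enlarged food set for $\R_2$, namely $Y_2$, then $r\in\R_2$''), so you have reproduced rather than introduced the gap; but to close it one must either read the closure at stage $i$ as taken relative to the food set $Y_i$, or require that the RAF of $\Q|Y_i$ chosen at stage $i$ contains $\R_{i-1}$ (as happens automatically in the paper's maximal construction $\overline{\R}(\Q)$, where the maxRAF of $\Q|Y_i$ contains the RAF $\R_{i-1}$ of $\Q|Y_i$). Your write-up should state explicitly which reading it adopts; as it stands, ``from the outset'' silently identifies $F\cup\pi(\R_i')$ with $Y_i$.
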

\begin{proof}
Part (i) follows by definition, since each set in the generating sequence is the closure of a subRAF of $\Q$ and is therefore a closed RAF for $\Q$, and a genRAF is the final set in its generating sequence. 

Part (ii):  For each reaction $r \in \R$, let $\rho(r)$ denote the set of reactants of $r$.
We prove Part(ii) by induction on $k$.  For $k=2$, suppose that $r \in \R_1$.  Then $\rho(r) \subseteq F$ and there exists
some molecule type $x \in F \cup \pi(\R_1)$ that catalyses $r$.  Since $\R_2$ is a closed RAF, and since the reactants and at least one catalyst (namely $x$)
are available in the enlarged food set for $\R_2$, namely $Y_2 = F \cup \pi(\R_1)$,  then $r \in \R_2$. Thus Part (ii) holds for $k=2$.
Suppose now that Part (ii) holds for $k=m$ and that $\R_1, \R_2, \ldots, \R_{m+1}$ is a generating sequence for $\Q$.
We need to show that $\R_{m}\subseteq \R_{m+1}$.  To this end, suppose that $r \in \R_{m}$.  Then $\rho(r) \subseteq \pi(\R_{m-1})$ and there exists a molecule type
$x \in F \cup \pi(R_m)$ that catalyses $r$.  Now $\R_{m-1} \subseteq \R_m$ by induction and so $\rho(r) \subseteq \pi(\R_m)$. 
Thus the reactants and at least one catalyst of $r$ are in $Y_{m+1} =  F \cup \pi(R_m)$, and so, by the closure property, $r \in \R_{m+1}$. 
This establishes the induction step, and thereby Part (ii). 
\end{proof}

A natural question in the light of Lemma~\ref{nesting}(i) is the following: Is every closed RAF in a CRS generative?
 The answer to this is `no'  in general;  for example, a CRS may have a maxRAF
that requires too much `jumping ahead' with catalysis (chains of initially spontaneous reactions) to be built up in this way, as in  
Fig.~\ref{fig6}(b).  Shortly (Theorem~\ref{secondmain}) we will provide a precise, and efficiently checkable, characterisation for when a closed RAF is a genRAF.

Another instructive example is the following maxRAF that arose in a study of the binary polymer model from \cite{hor17}:
\begin{align*}
r_1 &: 10 + 0 \xrightarrow{01100} 100\\
r_2 &: 01+100 \xrightarrow{0} 01100\\
r_3 &: 10+1   \xrightarrow{0} 101\\
r_4 &: 11+10  \xrightarrow{101} 1110\\
r_5 &: 1110+0 \xrightarrow{101} 11100
\end{align*}
where $F=\{0, 1, 00, 01, 10, 11\}$. This maxRAF contains six subRAFs,  two of which are closed, namely, the full set  of all five reactions,  which is not generative, and the subset  $\{r_3, r_4, r_5\}$, which is a genRAF.

\bigskip

{\bf A maximal generative RAF:}
Given a CRS $\Q= (X, \R, C, F)$, consider the following sequence $(\overline{\R}_i, i \geq 1)$ of subsets of  $\R$.
Let $\Q_1 := \Q|F$,  let $\R_i ={\rm maxRAF}(\Q_1)$ and let $\overline{\R}_1$ be the closure of $\R_1$ in $\Q$.  
For $i>1$,  let $$\R_i = {\rm maxRAF}(\Q_i), \mbox{ where }  \Q_i:= F \cup \pi(\overline{\R}_{i-1}),$$
and let $\overline{\R}_i$ be the closure of $\R_i$ in $\Q$.

Note that $\R_1$ may be empty even if $\Q$ has a RAF (as Fig.~\ref{fig6}(b) shows), in which case, $\overline{\R}_i = \emptyset$ for all $i \geq 1$. 
However, if $\R_1$ is nonempty, then $\overline{\R}_i$ forms an increasing nested sequence of closed RAFs  for $\Q$ 
and so  the sequence  stabilises at some subset of reactions that we denote by 
 $\overline{\R}(\Q)$. Thus, $\overline{\R}(\Q) = \cup_{i\geq1} \overline{\R}_i$, and this set is identical to     
$\overline{\R}_k$ for some sufficiently large value of $k$ (with $k\leq |\R|$).

We can now state the main result of this section.

\begin{thm}
\label{secondmain}
Suppose that $\Q= (X, \R, C, F)$ is a CRS.
\begin{itemize}
\item[(i)]
$\Q$ contains a genRAF if and only if $\R_1 \neq \emptyset$, in which case
 $\overline{\R}(\Q)$ is a genRAF for $\Q$ that contains all other genRAFs for $\Q$.
 \item[(ii)]
 If $\R'$ is a closed RAF for $\Q$ then $\R'$ is a genRAF for $\Q$ if and only if $\R' =\overline{\R}(\Q')$,  where $\Q' = (X, \R', C, F)$.
\item[(iii)]
The construction of $\overline{\R}(\Q)$ and determining whether an arbitrary closed RAF $\R'$ for $\Q$ is generative  can be determined in polynomial time in $|\Q|$.
\item[(iv)]
Determining whether a given closed genRAF $\R'$ contains a  strict subset that is a closed RAF for $\Q$ can be solved in polynomial time in  $|\Q|$.
 \end{itemize}
\end{thm}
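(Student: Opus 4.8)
The plan is to get Part~(i) from the monotonicity properties that already underlie Lemma~\ref{nesting}, to bootstrap Parts~(ii)--(iv) off Part~(i) together with the elementary theory of Theorem~\ref{bulletthm}, and to reduce the subRAF question in Part~(iv) to a combination of the elementary case and a reaction‑deletion test; I expect Part~(iv) to be where the real work lies. For Part~(i) the forward direction is immediate: the first term of any generating sequence is the closure of a RAF of $\Q|F$, so $\Q|F$ has a RAF and hence $\R_1={\rm maxRAF}(\Q|F)\neq\emptyset$. For the converse I would first record two monotonicity facts: (a) if $F\subseteq Y\subseteq Y'$ then every RAF of $\Q|Y$ is a RAF of $\Q|Y'$ (enlarging the food set only helps the $F$‑generated and RA conditions), so ${\rm maxRAF}(\Q|Y)\subseteq{\rm maxRAF}(\Q|Y')$; and (b) the closure operator $\overline{(\cdot)}$ in $\Q$ is monotone. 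An induction using (a) and (b) shows that once $\R_1\neq\emptyset$ the sequence $(\overline{\R}_i)$ is nested increasing; since $\R$ is finite it stabilises at $\overline{\R}(\Q)$ after at most $|\R|$ steps, and truncating it at stabilisation exhibits $\overline{\R}(\Q)$ as the final term of a generating sequence, so it is a genRAF. Maximality follows by the same kind of induction: for a genRAF $\R'$ with generating sequence $\R'_1,\ldots,\R'_m$, one shows $\R'_i\subseteq\overline{\R}_i$ for all $i$ (base case: the RAF of $\Q|F$ generating $\R'_1$ lies in ${\rm maxRAF}(\Q|F)=\R_1$; step: since $\R'_{i-1}\subseteq\overline{\R}_{i-1}$, the RAF of $\Q|Y'_i$ generating $\R'_i$ is also a RAF of $\Q|Y_i$ by (a), hence lies in $\R_i$, and now apply (b)), giving $\R'=\R'_m\subseteq\overline{\R}(\Q)$.

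For Part~(ii) the key lemma to isolate is that, whenever $\R'$ is closed in $\Q$ and $\Q'=(X,\R',C,F)$, for every $S\subseteq\R'$ the closures of $S$ computed in $\Q$ and in $\Q'$ coincide (one inclusion is trivial; for the other, $\overline{S}$ computed in $\Q$ already lies in $\R'$ by minimality, after which the two closure rules agree), and that ``RAF of $\Q'|Y$'' means exactly ``RAF of $\Q|Y$ contained in $\R'$''. Granting this together with Lemma~\ref{nesting}(ii), which puts every term of a generating sequence for $\R'$ inside $\R'$, a generating sequence for $\R'$ in $\Q$ is exactly a generating sequence for $\R'$ in $\Q'$ and conversely; hence $\R'$ is a genRAF for $\Q$ iff it is one for $\Q'$. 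Since $\overline{\R}(\Q')\subseteq\R'$ by construction, Part~(i) applied to $\Q'$ then yields that a closed RAF $\R'$ is a genRAF iff $\R'=\overline{\R}(\Q')$. Part~(iii) is then routine: the construction of $\overline{\R}(\Q)$ alternates a ${\rm maxRAF}$ computation (polynomial, by \cite{hor}) with a closure computation (polynomial) and terminates after at most $|\R|$ iterations since the $\overline{\R}_i$ strictly increase until they stabilise; and to decide whether a given closed RAF $\R'$ is generative one computes $\overline{\R}(\Q')$ for the instance $\Q'=(X,\R',C,F)$ (no larger than $\Q$) and invokes Part~(ii).

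For Part~(iv), by Part~(ii) we may replace $\Q$ by $\Q'=(X,\R',C,F)$, so that $\R'={\rm maxRAF}(\Q')=\overline{\R}(\Q')$, and the question becomes whether $\R'$ fails to be a minimal closed RAF of $\Q'$. I would first compute the generating sequence $\R_1\subsetneq\cdots\subsetneq\R_k=\R'$ produced by the $\overline{\R}(\cdot)$‑construction: if $k\geq 2$ then $\R_{k-1}$ is, by Lemma~\ref{nesting}(i), a closed RAF strictly inside $\R'$, and we answer YES. So assume $k=1$, i.e.\ $\R'=\overline{{\rm maxRAF}(E)}$ with $E:=\Q'|F$ elementary. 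Now I would run two polynomial tests. First, using Theorem~\ref{bulletthm}(iv)--(v), list the (at most $|\R'|$) minimal closed RAFs of $E$, close each in $\Q'$, and check whether any such closure is a proper subset of $\R'$; a short argument (every subRAF of $\R'$ all of whose reactions have their reactants in $F$ contains an irrRAF of $E$, hence a minimal closed RAF of $E$, and closures are monotone) shows this detects exactly the strict closed subRAFs living inside $E$. Second, for each reaction $r\in\R'$ compute $M_r:={\rm maxRAF}\bigl((X,\R'\setminus\{r\},C,F)\bigr)$, then $\overline{M_r}$ in $\Q'$ (using $\overline{\emptyset}$ if $M_r=\emptyset$), and test whether $r\notin\overline{M_r}$ for some $r$: when this happens $\overline{M_r}$ is a closed RAF of $\Q'$ avoiding $r$, hence a strict closed subRAF.

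The substance of Part~(iv) is the converse --- that these tests together are \emph{complete}: if $\R'$ has any strict closed subRAF then one of them fires. The delicate case is a ``spontaneous‑chain'' closed subRAF $N\subsetneq\R'$ --- one that is $F$‑generated but whose first‑layer reactions must run uncatalysed until a later reaction supplies a catalyst (e.g.\ a pair of reactions $r,s$ with $r\colon f\to x$ catalysed by a product of $s$ and $s$ having reactant $x$) --- since such an $N$ need not be a genRAF even though $\R'$ is, so it escapes both the elementary analysis of $E$ and the $\overline{\R}(\cdot)$‑construction. I expect proving completeness to require genuinely using $\R'=\overline{\R}(\Q')$: taking $N$ to be a \emph{maximal} closed RAF strictly inside $\R'$ and $r\in\R'\setminus N$, one wants ${\rm maxRAF}(\R'\setminus\{r\})$ to fail to re‑trigger $r$ for at least one such $r$, and it is the generative structure of $\R'$ that should force the ``scaffold'' $\R'\setminus N$ to contain such an essential $r$ (a phenomenon visible in small examples, where removing a scaffold reaction that uniquely produces its own catalyst collapses the maxRAF down to the chain). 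Establishing this last equivalence --- that finitely many ``canonical'' candidates suffice to witness every closed subRAF of a genRAF, but not of an arbitrary closed RAF --- is the main obstacle; everything else is bookkeeping built on Part~(i) and Theorem~\ref{bulletthm}.
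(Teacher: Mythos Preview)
Your treatment of Parts~(i)--(iii) matches the paper's almost exactly: the same induction $\R'_i\subseteq\overline{\R}_i$ for maximality in (i), the same reduction of (ii) to Part~(i) applied to $\Q'=(X,\R',C,F)$ via the observation that closures computed in $\Q$ and $\Q'$ agree on subsets of a closed $\R'$, and the same polynomial bookkeeping for (iii). Your monotonicity facts~(a),(b) are exactly the hidden lemmas the paper is using.

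Where you diverge is Part~(iv), and there you have a genuine gap---one you correctly flag. Your plan splits on the length $k$ of the canonical generating sequence and, when $k=1$, combines the elementary analysis of $E=\Q'|F$ with a reaction-deletion test $r\mapsto\overline{M_r}$. The paper does \emph{not} use reaction deletion at all. Instead it iterates over \emph{every} level $j$ of the generating sequence (not just $j=1$, and not stopping at stabilisation): for each $j$ it forms the elementary CRS $\Q'_j$ with enlarged food set $Y_j=F\cup\pi(\overline{\R}'_{j-1})$, enumerates the polynomially many minimal closed RAFs of $\Q'_j$ via Theorem~\ref{bulletthm}(v), closes each in $\Q$, and checks whether any such closure is a strict subset of $\R'$. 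The point you are missing is that once $j$ is large enough that $Y_j=F\cup\pi(\R')$, \emph{every} reaction of $\R'$ has its reactants in $Y_j$, so even a ``spontaneous-chain'' closed subRAF $N$---the very case you worry about---sits inside the elementary CRS $\Q'_j$ as an ordinary RAF, and the elementary machinery can see it. There is thus no need to invent a separate completeness argument for a deletion test: the stratification by levels already absorbs the non-generative subRAFs into an elementary picture at a later level.

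Concretely, your ``main obstacle'' evaporates if you drop Test~2 and instead extend Test~1 from the single elementary system $E=\Q'|F$ to the full family $\Q'_1,\Q'_2,\ldots$ (finitely many, since the $Y_j$ stabilise). That is the paper's algorithm. For the correctness direction, the paper takes a minimal closed RAF $\R_*\subsetneq\R'$, picks the least $j$ with $\R_*\subsetneq\overline{\R}'_j$, observes that $\R_*$ is then a RAF of the elementary system $\Q'_j$, and argues that some minimal closed RAF of $\Q'_j$ sitting inside $\R_*$ has $\Q$-closure contained in $\overline{\R_*}=\R_*\subsetneq\R'$, so the search succeeds.
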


\begin{proof}
For the first claim in  Part (i), if $\R_1=\emptyset$ then $\Q|F$ contains no RAF and so $\Q$ has no genRAF.  
Suppose that $\R_1\neq \emptyset$.  Then $\overline{\R}(\Q)$ is a genRAF for $\Q$ since it has the generating sequence $\overline{\R}_i$ ($i \geq 1$) (noting that  $\overline{\R}_i$ is the closure in $\Q$ of $\R_i$ which is the maxRAF (and so a RAF) for $\Q|F$ when $i=1$, and for $\Q|(F \cup \pi(\overline{\R}_{i-1})$, when $i>1$).  
For the second claim in Part (i),  suppose that  $\R'$ is a genRAF for $\Q$; will show that $\R' \subseteq \overline{\R}(\Q)$. Let $(\R'_i, i \geq 1)$ be a generating sequence for $\R'$.  We show by induction on $i$ that $\R'_i \subseteq \overline{\R}_i$ for all $i>1$.  The base case $i=1$ holds since $\R_1$ is the maxRAF of $\Q|F$ which contains any other RAF of $\Q|F$, and so the closure of $\R_1$ in $\Q$, namely $\overline{\R}_1$  contains the closure in $\Q$ of any other RAF of $\Q|F$.
Suppose the induction hypothesis holds for  all values of $i$ up to $j\geq 1$. Then $\R_{j+1}$ is the maxRAF of $\Q|(F \cup \pi(\overline{\R}_j))$ and so it contains any RAF of $\Q|(F \cup \pi(\R'_j))$ since
$\R'_j \subseteq \overline{\R}_j$ (by the induction hypothesis) and so $F \cup \pi(\R'_j) \subseteq F \cup \pi(\overline{\R}_j)$.  Consequently,  the closure of $\R_{j+1}$ in $\Q$, namely, $\overline{\R}_{j+1}$,  contains the closure in $\Q$ of any RAF of $\Q|(F \cup \pi(\R'_j))$.  Thus the induction hypothesis holds, which establishes that $\R' \subseteq \overline{\R}(\Q)$.

For Part (ii), observe that $\overline{\R}(\Q')$ is a genRAF for $\Q'$ (by Part (i)) and so if  $\R' =  \overline{\R}(\Q')$ then $\R'$ is a genRAF for $\Q'$. Since $\R'$ is a closed RAF for $\Q$, $\R'$ is also a genRAF for $\Q$ (since the closure  in $\Q$ of any subset of reactions from $\R'$ is a subset of $\R'$). 
Conversely, suppose that $\R'$ is a genRAF for $\Q$. Then since $\R'$ is a closed RAF for $\Q$,  $\R'$ is also a genRAF for $\Q'$.  Now,  $\overline{\R}(\Q') \subseteq \R'$, and since  $\overline{\R}(\Q')$ contains any other genRAF for $\Q'$ (in particular, $\R'$) by Part (i),  we have $\overline{\R}(\Q')  =  \R'$, as required.

For Part (iii), the proof of the claim (regarding the construction of $\overline{\R}(\Q)$)  follows from the fact that the maxRAF (of $\Q_i$), and its closure (in $\Q$) can be computed in 
polynomial time in the size of the CRS \citep{hor}. The  the second claim then follows from Part (ii).

For Part (iv), consider the following algorithm.  
Given a closed RAF $\R'$ for $\Q$, let $\overline{\R}'_1, \overline{\R}'_2, \ldots$  be the generating sequence for $\overline{\R}(\Q')$ (described above, but with $\R$ replaced by $\R'$ and $\Q$ by $\Q' = (X, \R', C, F)$). From Part (ii) we have $\overline{\R}(\Q')= \R'$, and so $\overline{\R}'_1, \overline{\R}'_2, \ldots$ is a generating sequence for $\R'$.

Now, let  $\Q'_1 = \Q'|F$ and for $i>1$, let  $\Q'_i  = \Q|(F \cup \pi(\overline{\R}'_{i-1})).$  Notice that $\Q'_1$ is an elementary CRS, and for $i>1$ we can regard $\Q'_i$ as an elementary RAF with enlarged food set $F \cup \pi(\overline{\R}'_{i-1})$.
Thus we can apply Part (v) of Theorem~\ref{bulletthm}, and in polynomial time in $|\Q|$ search all the minimal closed RAFs for $\Q_j$ and determine whether the closure in $\Q$ of any of these results in a strict subset (say $\R''$) of $\R'$. 
 When such a set $\R''$ exists, its closure is clearly a closed RAF for $\Q$ that is a strict subset of $\R'$. However, if no such 
set $\R''$ is located then we claim that $\R'$ contains no closed RAF for $\Q$ as a strict subset. To see why, suppose that there is a closed RAF for $\Q$ that is strictly contained within
$\R'$. In that case there exists a minimal closed RAF for $\Q$ that is strictly contained in $\R'$, and we denote such a minimal closed RAF as $\R_*$. Let $j$ be the smallest value of 
$i$ for which $\R_*$ is contained in $\overline{\R}'_i$ as a strict subset (this is well defined, since $\R_*$ is strictly contained in $\R'$).   Then $\R_*$  is a  closed RAF for $\Q_j$ also, and its closure in $\Q$ is a strict subset of $\R'$, so the closure in $\Q$ of any minimal closed RAF for $\Q_j$ that lies strictly within $\R_*$ would also be a strict subset of $\R'$.

\end{proof}
\hfill$\Box$

{\bf Remarks:}  

\begin{itemize}
\item
If a CRS has a CAF (defined at the end of the Introduction), then the (unique) maximal CAF is generative.  However, a genRAF need not necessarily correspond to a maximal CAF.

\item 
Part (iv) of Theorem~\ref{secondmain} provides an interesting contrast to the general RAF setting. There the question of determining whether a closed RAF (e.g. the maxRAF) in an arbitrary CRS contains another closed RAF as a strict subset has unknown complexity.


\end{itemize}

\bigskip

\section{RAFs with reaction rates}

In this section, we consider a further refinement of RAF theory, by explicitly incorporating reaction rates into the analysis. This conveniently addresses one shortcoming implicit in the generative RAF definition from the last section -- namely
a generative RAF necessarily grows as a monotonically increasing nested system with the length of its associative generating sequence (Lemma~\ref{nesting}).   However, once a sufficiently large generative RAF is established,  one or more of its subRAFs  may then become dynamically favoured  if it is more `efficient' (i.e. all its reactions proceed at higher reaction rates than the generative RAF it lies within), as we shortly illustrate with a simple example.

Suppose that we have a CRS $\Q = (X, \R, C, F)$ and a function $f: C \rightarrow \RR^{\geq 0}$ that assigns a non-negative real number to each pair $(x, r) \in C$.
The interpretation here is that  $f(x, r)$ describes the {\em rate} at which reaction $r$ proceeds when the catalyst  $x$ is present.  

Given $\Q$ and $f$, together with a RAF $\R'$ for $\Q$, let:
$$\varphi(\R') =\min_{r \in \R'} \{\max\{f(x,r): (x, r) \in C, x \in {\rm cl}_{\R'}(F)\}\}$$
In other words,  $\varphi(\R')$ is the rate of the slowest reaction in the RAF $\R'$ under the most optimal choice of catalyst for each reaction in $\R'$ amongst those catalysts that are present in ${\rm cl}_{\R'}(F)$.
 
\begin{figure}[htb]
\centering
\includegraphics[scale=1.2]{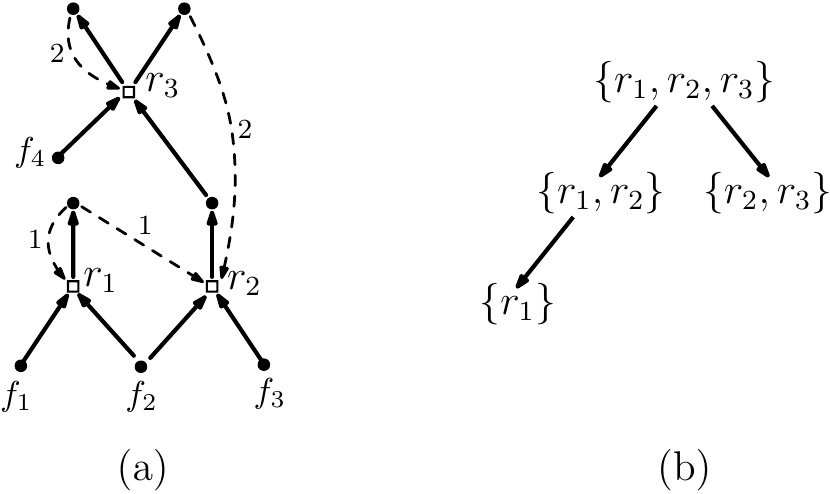}
\caption{(a) A RAF in which the catalysis arcs have associated rates (namely, the values 1 and 2 as indicated).  The poset consisting of  the maxRAF and its three subRAFs (partially ordered by set inclusion)  is shown by the Hasse diagram in (b).  All four RAFs have $\varphi$--values of $1$ except for the subRAF $\{r_2, r_3\}$, which has a $\varphi$--value of $2$. This optimal RAF $\{r_2, r_3\}$ is not a generative RAF (whereas the other three RAFs are generative; indeed,  $\{r_1\}$ and $\{r_1, r_2\}$ are elementary).  Nevertheless, once the generative maxRAF $\{r_1, r_2, r_3\}$ has formed,  $\{r_2, r_3\}$ can then emerge as the dominant sub-RAF.}
\label{fig7}
\end{figure}

{\bf Example:}  Fig.~\ref{fig7} provides an example to illustrate the notions above.  In this CRS the three reactions comprises a RAF, with a $\varphi$--value equal to 1.  However there are three subRAFs, and one of these (namely $\{r_2, r_3\}$) has a higher $\varphi$--value.  However, the less optimal closed subRAF  $\{r_1, r_2\}$ is generative and likely to have formed before the optimal one;  otherwise $\{r_2, r_3\}$ would require a chain of two reactions to occur uncatalysed  ($r_2$ followed by $r_3$)  before the catalysts for them become available. The closed RAF $\{r_1, r_2\}$ may then expand to $\{r_1,r_2, r_3\}$ before 
 this second closed RAF is subsequently out-competed by its subRAF $\{r_2, r_3\}$, since the catalysed reactions in this subRAF run twice as fast as the reaction $r_1$.

Our main result in this section shows that finding a RAF to maximise $\varphi$ can be achieved by an algorithm that runs in polynomial time in the size of $\Q$.  

\begin{thm}
\label{thm3}
\mbox{}
 There is a polynomial-time algorithm to construct a RAF with largest possible $\varphi$--value from any CRS $\Q$ that contains  RAF.  Moreover, this constructed RAF is the maximal RAF with this $\varphi$--value.
\end{thm}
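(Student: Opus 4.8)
The plan is to exploit the fact that $\varphi$ only takes finitely many values — namely values in the finite set $\mathcal{V} = \{f(x,r) : (x,r) \in C\}$ — so that maximizing $\varphi$ reduces to a threshold search. Concretely, for a threshold $t \geq 0$, say a catalysis pair $(x,r)$ is \emph{$t$-fast} if $f(x,r) \geq t$, and let $C_t \subseteq C$ be the set of $t$-fast pairs. Define the restricted CRS $\Q_t = (X, \R, C_t, F)$, in which reaction $r$ is considered catalysed by $x$ only if $(x,r)$ is $t$-fast. The key observation is: a subset $\R'$ of $\R$ is a RAF of $\Q$ with $\varphi(\R') \geq t$ if and only if $\R'$ is a RAF of $\Q_t$. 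The forward direction is immediate from the definition of $\varphi$ (every reaction in $\R'$ has some $t$-fast catalyst in $\mathrm{cl}_{\R'}(F)$, and $\mathrm{cl}_{\R'}(F)$ is the same set whether computed in $\Q$ or $\Q_t$ since closures depend only on reactants/products, not catalysts). The reverse direction is equally direct. Hence $\Q$ has a RAF with $\varphi$-value $\geq t$ iff $\Q_t$ has a RAF at all, and in that case $\mathrm{maxRAF}(\Q_t)$ is the unique maximal such RAF.

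Given this, the algorithm is: compute the finite sorted list of distinct values $t_1 < t_2 < \cdots < t_m$ appearing in the image of $f$ (together with $t_0 = 0$); for each $t_j$, run the polynomial-time maxRAF algorithm of \citep{hor} on $\Q_{t_j}$; let $t^* = t_j$ be the largest threshold for which $\Q_{t_j}$ has a RAF; and output $\R^\star = \mathrm{maxRAF}(\Q_{t^*})$. Since $\Q$ has a RAF by hypothesis, at least $t_0 = 0$ succeeds, so $t^*$ is well-defined. There are at most $|C| + 1$ thresholds to try and each maxRAF computation is polynomial in $|\Q|$, so the whole procedure is polynomial in $|\Q|$.

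It remains to argue correctness of the two claims in the theorem statement. For optimality: $\varphi(\R^\star) \geq t^*$ by the equivalence above, and no RAF $\R'$ of $\Q$ can have $\varphi(\R') > t^*$ — for if it did, then since $\varphi(\R')$ lies in the finite value set, we would have $\varphi(\R') \geq t_{j}$ for some $t_j > t^*$, whence $\Q_{t_j}$ has a RAF, contradicting the maximality of $t^*$. (One should be slightly careful that $\varphi(\R')$, being a min of maxes of values of $f$, indeed lies in the image of $f$; this is clear.) For the maximality assertion: any RAF $\R'$ with $\varphi(\R') = t^*$ (the maximal value) is a RAF of $\Q_{t^*}$, hence contained in $\mathrm{maxRAF}(\Q_{t^*}) = \R^\star$; and $\R^\star$ itself has $\varphi$-value $\geq t^*$, which combined with optimality forces $\varphi(\R^\star) = t^*$. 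So $\R^\star$ is itself an optimal RAF and contains every other optimal RAF.

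The only mild subtlety — and the step I would be most careful about — is the claim that $\mathrm{cl}_{\R'}(F)$ is unaffected by passing from $C$ to $C_t$, so that the RA-condition threshold in $\Q$ and the plain RA-condition in $\Q_t$ genuinely coincide; this is where one must invoke that the closure operator and the $F$-generated condition depend only on reactants and products, while only the RA-condition references $C$. Everything else is bookkeeping: finitely many thresholds, monotonicity of "$\Q_t$ has a RAF" in $t$, and the cited polynomial-time maxRAF routine. No genuinely hard obstacle arises; the content is the reduction of a rate-optimization problem to a monotone sequence of RAF-existence queries.
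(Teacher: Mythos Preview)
Your proof is correct and follows essentially the same strategy as the paper: a threshold search over the finite set of rate values $\{f(x,r):(x,r)\in C\}$, invoking the polynomial-time maxRAF routine at each threshold and returning the maxRAF at the largest threshold that still admits a RAF. The only cosmetic difference is that the paper first replaces each reaction catalysed by $k$ molecule types with $k$ single-catalyst copies and then filters \emph{reactions} by rate, whereas you filter the \emph{catalysis relation} $C$ directly to obtain $\Q_t=(X,\R,C_t,F)$; your formulation is slightly more direct but the two are equivalent and the correctness argument is the same.
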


\begin{proof}
Let $\cL = \{f(x, r): (x,r) \in C\}$, and let $M = \max \cL$. 
Consider the  CRS $\Q' = (X', \R^*, C^*, F)$ obtained from $\Q$ by first deleting any uncatalysed reaction and then replacing each reaction $r$ that is catalysed by (say) $k \geq 1$ molecule types
with $k$ distinct copies of this reaction $r_1, \ldots, r_k$, each of which is catalysed by a different one of the $k$ molecule types. 
Thus each reaction $r$ in $\R^*$ is catalysed by exactly one molecule type, which we will denote
as $x(r)$.  For the associated catalysis set  $C^* = \{(x(r), r): r \in \R^*\}$,  let
$f'$ be the rate function induced by $f$ (i.e. if  $r \in \R$ is replaced by $r_1, \ldots, r_k \in \R^*$ then 
$f'(x(r_i), r_i) := f(x(r), r)$).
For each $\ell \in \cL$ let: 
$$\R^*_\ell = \{r \in \R^*: f'(x(r), r) \geq \ell \}.$$
In other words,  $\R^*_\ell$ is the set of catalyst-reaction pairs $(x(r), r)$ where the rate of reaction $r$ when catalysed by the molecule type $x(r)$ is at least $\ell$ (as specified by the rate function $f$).

Now,  let $\widetilde{\R}$  be the maxRAF of $\R^*_\ell$ for the largest value of $\ell \in \cL$ for which maxRAF($\R^*_\ell$) is nonempty.
This set is well-defined, since  $\R^*= \R^*_\ell$ when $\ell = 0$, and because $\R$ (and thereby $\R^*$) is  assumed to have a RAF. 
Notice that $\widetilde{\R}$ can be efficiently determined, by starting at $\ell = M$ and decreasing $\ell$ through the (at most 
$|\cL|\leq |C|$) possible values it can take until a nonempty maxRAF first appears (alternatively, one could start at $\ell=0$ and increase $\ell$ until the last value for which a nonempty maxRAF is present).

{\em Claim}: $\widetilde{\R}$ is a RAF that has the largest possible $\varphi$--value of any RAF for  $\Q'$, and $\widetilde{\R}$ contains any other RAF for $\Q'$ with this maximal $\varphi$--value.

To establish this claim,  suppose that $\widetilde{\R} = {\rm maxRAF}(\R_\ell)$ for $\ell = t$ and that ${\rm maxRAF}(\R_\ell) = \emptyset$ for
$\ell > t$ (i.e. $t$ is the largest value of $\ell$ in $\cL$ for which $\R_\ell$ has a (nonempty) maxRAF).
For each reaction $r$ in   $\widetilde{\R}$, we then have $f'(x(r), r) \geq t$, and for at least one reaction  $r$ in
  $\widetilde{\R}, f'(x(r), r) = t$ (otherwise, a larger value of $\ell$ would support a maxRAF). It follows that  $\varphi(\widetilde{\R})=t$.
Now if $\R'$ is any other RAF for $\Q'$, let $t'$ be the minimal value of $f'(x(r), r)$ over all choices of $r \in \R'$.
Then $t' \leq t$ otherwise, $\R_\ell$ would have a nonempty maxRAF for a value $\ell =t'$ that is greater than $t$, contradicting the maximality of $t$.  Thus $\R' \subseteq \R^*_t$ and so $$\R' = {\rm maxRAF}(\R') \subseteq {\rm maxRAF}(\R^*_t) =\widetilde{\R},$$ which shows
that $\widetilde{\R}$ contains any other RAF with this maximal value. 

This establishes the above Claim, and thereby Theorem~\ref{thm3} for $\Q'$. However, the subset of  reactions of $\R$ 
whose copies  are present in $\widetilde{\R}$ provides a RAF for $\Q$ that has the largest
possible $\varphi$--value (namely $t$) and which contains any other RAF for $\Q$ with this $\varphi$--value.
\hfill$\Box$

\end{proof}

{\bf Remark:}  For the example in Fig.~\ref{fig7}, we have the subRAFs 
$\R_1 = \{r_1\}, \R_2 = \{r_2, r_3\}$ with $\varphi(\R_1)< \varphi(\R_2)$.
In this case, there is a path in the poset from $\R_1$ to $\R_2$ on which $\varphi$ is non-decreasing 
(this path goes `up' then `down' in Fig.~\ref{fig7}(b)). An interesting question might be to determine when this holds: in other words, from a sub-optimal RAF, can a more optimal RAF be reached by a chain of RAFs that, at each stage, either adds certain reactions or deletes one or more reactions, and so that the optimality score (as measured by $\varphi$) does not decrease?

\subsection{Rates for `catalytic ensembles'}

We can extend the results on rates in the previous section to accommodate the following feature: a reaction for which a combination of two or more catalysts is present may proceed at a rate that is higher than if just one  catalyst is present.  

We formalize this as follows.  Recall that in a  CRS $\Q = (X, \R, C, F)$, the set $C$  represents the pattern of catalysis and is a subset of $X \times \R$. Thus $(x,r) \in C$ means that $x$ catalyses reaction $r$.   Now suppose we wish to allow a combination (ensemble)  of one or more molecules to act as a catalyst for a reaction. In this case,  we can represent the CRS as a quadruple
 $\Q = (X, \R, \C, F)$ where $\C \subseteq (2^X-\emptyset)  \times \R$ and where $(A,r) \in \C$ means that the ensemble of molecules in $A$ acts as a (collective) catalyst for $r$, provided they are all present. 
 We refer to $\Q$ as a {\em generalised} CRS.
The notions of RAF, subRAF, CAF, and so on, can be generalized naturally. For example,  the RA condition for a subset $\R'$  is that  for each reaction $r$, there is a pair $(A, r) \in \C$ where each of the molecule types in $A$ is in the closure of $F$ relative to $\R'$.

Note that an ordinary CRS can be viewed as a special case of a generalised CRS by identifying $(x,r)$ with the pair $(\{x\}, r)$.
Note also that each reaction may have several ensembles of possible catalysts, and some (or all of these) may be just single molecule types.

Given a generalised CRS  $\Q = (X, \R, \C, F)$ we can associate an ordinary CRS $\Q' =(X', \R', C', F)$ to $\Q$ as follows.  
Let $$\A_\C := \{A \subseteq 2^X - \emptyset: \exists r \in \R: (A,r) \in \C\};$$
(so $\A_\C$ is the collection of catalyst ensembles in $\Q$).
For each $A \in \A_C$, let $x_A$ be a new molecule type, and let $r_A$ be the (formal) reaction
$A \rightarrow x_A$. 
Now let
\begin{align*}
 X' := & X\mbox{ } \dot\cup \mbox{ } \{x_A: A \in \A_\C\};  \\
\R' := & \R \mbox{ } \dot\cup \mbox{ } \{r_A: A \in \A_\C\}; \mbox{ and }\\
C' := &  \{(x_A, r): (A, r) \in \C\} \mbox{ } \dot\cup \mbox{ } \{(x_A, r_A): A \in \A_C\}.
\end{align*}
Note that $C' \subseteq X' \times \R'$.

In other words, $\Q'$ is obtained from $\Q$ by replacing each catalytic ensemble $A$ by a new molecule type $x_A$ and adding in the reaction  $r_A: A \rightarrow x_A$ catalysed by $x_A$.
The proof of the following lemma is straightforward.

\begin{lemma}
\label{helps}
A generalised CRS $\Q$ has a RAF if and only if the associated ordinary CRS $\Q'$ has a RAF that contains at least one reaction from $\R$. 
Moreover, in this case, the RAFs of $\Q$ correspond to the nonempty intersections of RAFs of $\Q'$ with $\R$.
\end{lemma}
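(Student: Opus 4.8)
The plan is to prove the two assertions of Lemma~\ref{helps} by directly translating between RAFs of the generalised CRS $\Q$ and RAFs of the associated ordinary CRS $\Q'$, exploiting the fact that the auxiliary reactions $r_A: A \to x_A$ are built so that $x_A$ becomes available exactly when (and only when) all the molecules of the ensemble $A$ are available, and $r_A$ is self-catalysing. First I would fix notation: for a subset $\R''$ of $\R'$ write $\R''_0 = \R'' \cap \R$ for its ``original'' part and $\R''_A = \R'' \cap \{r_A : A \in \A_\C\}$ for its ``auxiliary'' part. The key preliminary observation is that for any set $W \subseteq X'$, we have $x_A \in {\rm cl}_{\R''}(F)$ (in $\Q'$) if and only if $r_A \in \R''$ and $A \subseteq {\rm cl}_{\R''_0}(F)$ (in $\Q$); this follows because $r_A$ is the only reaction of $\Q'$ producing $x_A$, its reactant set is $A$, and none of the $x_A$ appear as reactants of any reaction in $\R$. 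I would state and verify this as the first step, by a short induction on the closure operator.

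Next, for the ``only if'' direction of the first sentence, I would take a RAF $\R'$ (the lemma's $\R$-analogue, but to avoid a clash call it $\S$) for $\Q$ and define $\S' := \S \ \dot\cup\ \{r_A : (A,r) \in \C \text{ for some } r \in \S \text{ with } A \subseteq {\rm cl}_{\S}(F)\}$; using the generalised RA condition for $\S$ one checks each $r \in \S$ now has a catalyst $x_A \in {\rm cl}_{\S'}(F)$ available in $\Q'$ and that each $r_A$ in the added part is food-generated (its reactants $A$ lie in ${\rm cl}_{\S}(F) \subseteq {\rm cl}_{\S'}(F)$) and catalysed by $x_A$, which $r_A$ itself produces. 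So $\S'$ is a RAF for $\Q'$ containing at least one reaction of $\R$ (since $\S$ is nonempty). For the ``if'' direction, given a RAF $\T$ for $\Q'$ with $\T \cap \R \neq \emptyset$, I would show $\T \cap \R$ is a RAF for $\Q$: food-generation for reactions in $\T \cap \R$ transfers because their reactants lie in $X$ and the closure restricted to original reactions suffices (the $r_A$ reactions only produce $x_A \notin X$); and the RA condition transfers using the preliminary observation, since a catalyst of $r \in \T \cap \R$ in $\Q'$ must be some $x_A$ with $(A,r) \in \C$ and $x_A \in {\rm cl}_\T(F)$, which forces $A \subseteq {\rm cl}_{\T \cap \R}(F)$, i.e. the ensemble $A$ catalyses $r$ in $\Q$ with all of $A$ available.

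For the second sentence (the correspondence), I would argue that if $\T$ is any RAF of $\Q'$ then $\T \cap \R$ is either empty or a RAF of $\Q$ by the argument just given, and conversely every RAF $\S$ of $\Q$ arises this way, e.g. as $\S' \cap \R$ with $\S'$ the RAF of $\Q'$ constructed above; hence the map $\T \mapsto \T \cap \R$ surjects the RAFs of $\Q'$ with nonempty original part onto the RAFs of $\Q$, which is precisely the claimed statement. The main obstacle — though still routine — is bookkeeping the closure operators correctly across the two CRSs: one must be careful that adding the auxiliary molecule types $x_A$ and reactions $r_A$ cannot enlarge the closure of $F$ within $X$ (it cannot, since no $x_A$ is ever a reactant of an original reaction and the $r_A$ only produce $x_A$'s), so that ${\rm cl}_{\T}(F) \cap X = {\rm cl}_{\T \cap \R}(F)$; once this is pinned down, both the RA and F conditions transfer mechanically in both directions.
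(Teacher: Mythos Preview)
Your proposal is correct and supplies precisely the routine verification that the paper omits: the paper simply asserts that ``the proof of the following lemma is straightforward'' and gives no argument, so there is no alternative approach to compare against. Your construction $\S \mapsto \S' := \S \cup \{r_A : A \subseteq {\rm cl}_\S(F),\ (A,r)\in\C \text{ for some }r\in\S\}$ in one direction and the intersection $\T \mapsto \T \cap \R$ in the other, together with the key bookkeeping identity ${\rm cl}_\T(F) \cap X = {\rm cl}_{\T\cap\R}(F)$ (valid because the auxiliary reactions $r_A$ produce only the new molecules $x_A$, and no $x_A$ is a reactant of anything in $\R'$ since each $A \in \A_\C$ is a subset of $X$), is exactly the natural way to unpack the construction of $\Q'$.

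Two very minor cosmetic points: the phrase ``for any set $W \subseteq X'$'' in your preliminary observation is dangling and should be deleted; and you might note explicitly that $\Q'$ can have RAFs consisting entirely of auxiliary reactions (namely any nonempty subset of $\{r_A : A \in \A_\C,\ A \subseteq F\}$), which is why the lemma's hypothesis ``contains at least one reaction from $\R$'' is not vacuous.
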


Now suppose that we have a generalised CRS $\Q = (X, \R, \C, F)$ and a function $f: \C \rightarrow \RR^{\geq 0}$. The interpretation here is that  $f(A, r)$ describes the {\em rate} at which reaction $r$ proceeds when the catalyst ensemble $A$ is present.  

Given a RAF $\R'$ for $\Q$, let:
$$\varphi(\R') :=\min_{r \in \R'} \{\max\{f(A,r): (A, r) \in \C, A \subseteq {\rm cl}_{\R'}(F)\}\}$$
In other words,  $\varphi(\R')$ is the rate of the slowest reaction in the RAF $\R'$ under the most optimal choice of catalyst ensemble for each reaction in $\R'$ amongst catalyst ensembles that are subsets of ${\rm cl}_{\R'}(F)$.

Lemma~\ref{helps} now provides the following corollary of Theorem~\ref{thm3}.

\begin{cor}
\label{cor}
There is a polynomial-time algorithm to construct a RAF for $\Q$ with  largest possible $\varphi$--value from any CRS $\Q$ that contains a RAF.   Moreover, this constructed RAF is the maximal RAF for $\Q$  with this $\varphi$--value.
\end{cor}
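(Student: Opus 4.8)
The plan is to re-run, more or less verbatim, the threshold argument from the proof of Theorem~\ref{thm3}, but carried out directly for the generalised CRS $\Q=(X,\R,\C,F)$ rather than after splitting multi-catalyst reactions; the role of Lemma~\ref{helps} is to supply, in the generalised setting, the two combinatorial facts that argument relies on — existence of a unique maxRAF and its polynomial-time computability. First I would record these facts. A union of RAFs of $\Q$ is again a RAF of $\Q$, since enlarging $\R'$ only enlarges ${\rm cl}_{\R'}(F)$ and so preserves both the (RA) and $F$-generated conditions; hence $\Q$ has a unique maximal RAF whenever it has any RAF. For computability, pass to the associated ordinary CRS $\widehat\Q=(X',\R',C',F)$ of the construction preceding Lemma~\ref{helps}: by that lemma the lift of every RAF of $\Q$ is a RAF of $\widehat\Q$, so ${\rm maxRAF}(\Q)={\rm maxRAF}(\widehat\Q)\cap\R$ (and this intersection is nonempty exactly when $\Q$ has a RAF); since $|\A_\C|\le|\C|$, the CRS $\widehat\Q$ has size polynomial in $|\Q|$, so the algorithm of \cite{hor} applies. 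Note this routing also renders harmless the ``degenerate'' RAFs of $\widehat\Q$ consisting only of reactions $r_A$: they have empty intersection with $\R$ and therefore never interfere.

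Next I would set up the thresholds. Let $\cL:=\{f(A,r):(A,r)\in\C\}\cup\{0\}$, and for each $\ell\in\cL$ put $\C_\ell:=\{(A,r)\in\C: f(A,r)\ge\ell\}$ and $\Q_\ell:=(X,\R,\C_\ell,F)$. The key observation is that closures ${\rm cl}_{\R'}(F)$ depend only on reactants and products, not on the catalysis relation, so they are identical in $\Q$ and in every $\Q_\ell$. Consequently, for a RAF $\R'$ of $\Q$, the set $\R'$ is a RAF of $\Q_\ell$ if and only if $\varphi(\R')\ge\ell$, and every RAF of $\Q_\ell$ is in particular a RAF of $\Q$. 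Let $t$ be the largest $\ell\in\cL$ for which ${\rm maxRAF}(\Q_\ell)\neq\emptyset$; this is well-defined because $\Q_0=\Q$ has a RAF, and it is found in polynomial time by scanning the at most $|\C|+1$ values of $\ell$ and computing each ${\rm maxRAF}(\Q_\ell)$ via $\widehat{\Q_\ell}$ as above. Set $\widetilde\R:={\rm maxRAF}(\Q_t)$.

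Finally I would verify that $\widetilde\R$ is the desired RAF. It is a RAF of $\Q$, and since every reaction of $\widetilde\R$ has an available catalyst ensemble of rate $\ge t$ we get $\varphi(\widetilde\R)\ge t$; if $\varphi(\widetilde\R)\ge t'$ for some $t'\in\cL$ with $t'>t$, then $\widetilde\R$ would be a RAF of $\Q_{t'}$, contradicting maximality of $t$, so $\varphi(\widetilde\R)=t$. For optimality: any RAF $\R'$ of $\Q$ with $\varphi(\R')=s$ is a RAF of $\Q_s$, so ${\rm maxRAF}(\Q_s)\neq\emptyset$ and hence $s\le t$; thus $t$ is the maximum attainable $\varphi$-value and $\widetilde\R$ attains it. For maximality: if $\varphi(\R')=t$ then $\R'$ is a RAF of $\Q_t$, so $\R'\subseteq{\rm maxRAF}(\Q_t)=\widetilde\R$, i.e.\ $\widetilde\R$ contains every RAF of $\Q$ with this optimal $\varphi$-value.

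I expect the real work to be not an obstacle but careful bookkeeping in the second step of the above: checking that ``union of RAFs is a RAF'' still holds for generalised CRSs and that the detour through $\widehat\Q$ simultaneously yields polynomial-time maxRAF computation and correctly ignores the spurious $r_A$-only RAFs. Once that is in hand, the threshold argument is essentially a copy of the proof of Theorem~\ref{thm3}, and the polynomial-time bound follows from the $|\C|+1$ maxRAF computations.
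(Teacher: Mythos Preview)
Your proof is correct and uses essentially the same threshold approach as the paper, which states Corollary~\ref{cor} without further argument as a direct consequence of Theorem~\ref{thm3} via Lemma~\ref{helps}. The only difference is organisational: the paper's intended route is to convert $\Q$ to the associated ordinary CRS of Lemma~\ref{helps} and then invoke Theorem~\ref{thm3} as a black box, whereas you re-run the threshold argument of that theorem directly on the generalised CRS and use Lemma~\ref{helps} only to compute ${\rm maxRAF}(\Q_\ell)$ in polynomial time at each level---this is slightly cleaner in that it sidesteps the need to assign rates to the auxiliary reactions $r_A$, but the two arguments are otherwise the same.
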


\section{Concluding comments}

In this paper, we have considered special types of RAFs that allow for exact yet tractable mathematical and algorithmic analysis, and which also incorporate additional biochemical realism (restricting the depth of uncatalysed reactions chains in generative RAFs and 
allowing reaction rates).  

We first considered the special  setting of `elementary' systems in which all reactions (or at least those present in the maxRAF) have all their reactants present in the food set. This allows for the structure of the collection of RAFs, irrRAFs, and closed subRAFs to be explicitly described graph-theoretically. As a result, some problems that are computationally intractable in the general CRS setting turn out to be polynomial-time for an elementary CRS. For example, one  can efficiently find the smallest RAFs in an elementary CRS, which is an NP-hard problem in general  \citep{ste13}.
Also, the number of minimal closed subRAF in an elementary CRS is linear in the size of the set of reactions (for a general CRS,  they can be exponential in number).  For future work, 
it may  be of interest to determine if there are polynomial-time algorithms that can answer the following questions for an elementary CRS:  (i) What is the size of the largest irrRAF? (ii) If inhibition is allowed, then is there a RAF that has no inhibition? 

The  concept of an `elementary' CRS is an all-or-nothing notion. One way to extend  the results above could be to define  the notion of `level', whereby a CRS has level $k$ if
the length of the longest path from the food set to any reaction product goes through at most $k$ reactions (an elementary CRS thus has level 1).  We have not explored this further here but instead, we consider the related alternative notion of a generative RAF. Briefly, a generative RAF allows a RAF to form by effectively enlarging its `food set' with products of reactions,  so that each step only requires catalysts that are either present or produced by reactions in the RAF at that stage.
Although generative RAFs are more complex than elementary ones, their close connection to elementary RAFs
(in a stratified way) allows for a more tractable analysis than for general RAFs.  Moreover, unlike elementary RAFs, no special assumption is required on the underlying CRS; generative RAFs are just a special type of RAF that can be generated in a certain sequential fashion in any CRS.

In the final section, we considered the impact of rates of RAFs (which need not be generative), and particularly the algorithmic question of finding a RAF that maximises the rates of its slowest reaction.
Not only is this problem solvable in the size of the CRS, but it can also be extended to the slightly more general setting of allowing `catalytic ensembles'.  The introduction of rates allows for the study of how  a population of different closed subRAFs might evolve over time,  in which primitive subRAF are replaced (out-competed) by efficient ones that rely on new catalysts in place of more primitive ones. We hope to explore this further in future work.

\section*{Acknowledgments}

WH thanks the {\em Institute for Advanced Study}, Amsterdam, for financial support in the form of a fellowship.

\bibliographystyle{siamplain}
\bibliography{RAF_rates}

\begin{mcitethebibliography}{19}
\providecommand*\natexlab[1]{#1}
\providecommand*\mciteSetBstSublistMode[1]{}
\providecommand*\mciteSetBstMaxWidthForm[2]{}
\providecommand*\mciteBstWouldAddEndPuncttrue
  {\def\EndOfBibitem{\unskip.}}
\providecommand*\mciteBstWouldAddEndPunctfalse
  {\let\EndOfBibitem\relax}
\providecommand*\mciteSetBstMidEndSepPunct[3]{}
\providecommand*\mciteSetBstSublistLabelBeginEnd[3]{}
\providecommand*\EndOfBibitem{}
\mciteSetBstSublistMode{f}
\mciteSetBstMaxWidthForm{subitem}{(\alph{mcitesubitemcount})}
\mciteSetBstSublistLabelBeginEnd
  {\mcitemaxwidthsubitemform\space}
  {\relax}
  {\relax}

\bibitem[Kauffman(1986)]{kau1}
Kauffman,~S.~A. \emph{J. Theor. Biol.} \textbf{1986}, \emph{19}, 1--24\relax
\mciteBstWouldAddEndPuncttrue
\mciteSetBstMidEndSepPunct{\mcitedefaultmidpunct}
{\mcitedefaultendpunct}{\mcitedefaultseppunct}\relax
\EndOfBibitem
\bibitem[Kauffman(1993)]{kau2}
Kauffman,~S.~A. \emph{The Origins of Order}; Oxford University Press,
  1993\relax
\mciteBstWouldAddEndPuncttrue
\mciteSetBstMidEndSepPunct{\mcitedefaultmidpunct}
{\mcitedefaultendpunct}{\mcitedefaultseppunct}\relax
\EndOfBibitem
\bibitem[Hordijk and Steel(2017)Hordijk, and Steel]{hor17}
Hordijk,~W.; Steel,~M. \emph{Biosystems} \textbf{2017}, \emph{152}, 1--10\relax
\mciteBstWouldAddEndPuncttrue
\mciteSetBstMidEndSepPunct{\mcitedefaultmidpunct}
{\mcitedefaultendpunct}{\mcitedefaultseppunct}\relax
\EndOfBibitem
\bibitem[Jaramillo \latin{et~al.}(2010)Jaramillo, Honorato-Zimmer, Pereira,
  Contreras, Reynaert, Hern{\'a}ndez, Soto-Andrade, C{\'a}rdenas,
  Cornish-Bowden, and Letelier]{jar}
Jaramillo,~S.; Honorato-Zimmer,~R.; Pereira,~U.; Contreras,~D.; Reynaert,~B.;
  Hern{\'a}ndez,~V.; Soto-Andrade,~J.; C{\'a}rdenas,~M.~L.; Cornish-Bowden,~A.;
  Letelier,~J.~C. (M,R) systems and RAF sets: common ideas, tools and
  projections. Proceedings of the ALIFEXII Conference, Odense Denmark, August
  2010. 2010; pp 94--100\relax
\mciteBstWouldAddEndPuncttrue
\mciteSetBstMidEndSepPunct{\mcitedefaultmidpunct}
{\mcitedefaultendpunct}{\mcitedefaultseppunct}\relax
\EndOfBibitem
\bibitem[Hordijk \latin{et~al.}(2017)Hordijk, Steel, and Dittrich]{hor17a}
Hordijk,~W.; Steel,~M.; Dittrich,~P. \emph{New J. Phys.} \textbf{2017},
  \emph{DOI:10.1088/1367-2630/aa9fcd}\relax
\mciteBstWouldAddEndPuncttrue
\mciteSetBstMidEndSepPunct{\mcitedefaultmidpunct}
{\mcitedefaultendpunct}{\mcitedefaultseppunct}\relax
\EndOfBibitem
\bibitem[Sousa \latin{et~al.}(2015)Sousa, Hordijk, Steel, and Martin]{sou}
Sousa,~F.~L.; Hordijk,~W.; Steel,~M.; Martin,~W. \emph{J. Syst. Chem.}
  \textbf{2015}, \emph{6}, 4\relax
\mciteBstWouldAddEndPuncttrue
\mciteSetBstMidEndSepPunct{\mcitedefaultmidpunct}
{\mcitedefaultendpunct}{\mcitedefaultseppunct}\relax
\EndOfBibitem
\bibitem[Vaidya \latin{et~al.}(2012)Vaidya, Manapat, A., Xulvi-Brunet, and
  Hayden]{vai}
Vaidya,~N.; Manapat,~M.~L.; A.,~C.~I.; Xulvi-Brunet,~R.; Hayden,~N.,~E.
  J.~Lehman \emph{Nature} \textbf{2012}, \emph{491}, 72--77\relax
\mciteBstWouldAddEndPuncttrue
\mciteSetBstMidEndSepPunct{\mcitedefaultmidpunct}
{\mcitedefaultendpunct}{\mcitedefaultseppunct}\relax
\EndOfBibitem
\bibitem[Gatti \latin{et~al.}(2017)Gatti, Hordijk, and Kauffman]{gat}
Gatti,~R.~C.; Hordijk,~W.; Kauffman,~S. \emph{Ecol. Model.} \textbf{2017},
  \emph{346}, 70--76\relax
\mciteBstWouldAddEndPuncttrue
\mciteSetBstMidEndSepPunct{\mcitedefaultmidpunct}
{\mcitedefaultendpunct}{\mcitedefaultseppunct}\relax
\EndOfBibitem
\bibitem[Gabora and Steel(2017)Gabora, and Steel]{gab}
Gabora,~L.; Steel,~M. \emph{J. Theor. Biol.} \textbf{2017}, \emph{63},
  617--638\relax
\mciteBstWouldAddEndPuncttrue
\mciteSetBstMidEndSepPunct{\mcitedefaultmidpunct}
{\mcitedefaultendpunct}{\mcitedefaultseppunct}\relax
\EndOfBibitem
\bibitem[Hordijk and Steel(2004)Hordijk, and Steel]{hor}
Hordijk,~W.; Steel,~M. \emph{J. Theor. Biol.} \textbf{2004}, \emph{227},
  451--461\relax
\mciteBstWouldAddEndPuncttrue
\mciteSetBstMidEndSepPunct{\mcitedefaultmidpunct}
{\mcitedefaultendpunct}{\mcitedefaultseppunct}\relax
\EndOfBibitem
\bibitem[Ashkenasy \latin{et~al.}(2004)Ashkenasy, Jegasia, Yadav, and
  Ghadiri]{ash}
Ashkenasy,~G.; Jegasia,~R.; Yadav,~M.; Ghadiri,~M.~R. \emph{Proc. Natl. Acad.
  Sci. USA} \textbf{2004}, \emph{101}, 10872--10877\relax
\mciteBstWouldAddEndPuncttrue
\mciteSetBstMidEndSepPunct{\mcitedefaultmidpunct}
{\mcitedefaultendpunct}{\mcitedefaultseppunct}\relax
\EndOfBibitem
\bibitem[Jain and Krishna(1998)Jain, and Krishna]{jai98}
Jain,~S.; Krishna,~S. \emph{Phys. Rev. Lett.} \textbf{1998}, \emph{81},
  5684--5687\relax
\mciteBstWouldAddEndPuncttrue
\mciteSetBstMidEndSepPunct{\mcitedefaultmidpunct}
{\mcitedefaultendpunct}{\mcitedefaultseppunct}\relax
\EndOfBibitem
\bibitem[Tarjan(1972)]{tar}
Tarjan,~R.~E. \emph{SIAM J. Comput.} \textbf{1972}, \emph{1}, 146--160\relax
\mciteBstWouldAddEndPuncttrue
\mciteSetBstMidEndSepPunct{\mcitedefaultmidpunct}
{\mcitedefaultendpunct}{\mcitedefaultseppunct}\relax
\EndOfBibitem
\bibitem[Hordijk \latin{et~al.}(2012)Hordijk, Steel, and Kauffman]{hor12}
Hordijk,~W.; Steel,~M.; Kauffman,~S. \emph{Acta Biotheor.} \textbf{2012},
  \emph{60}, 379--392\relax
\mciteBstWouldAddEndPuncttrue
\mciteSetBstMidEndSepPunct{\mcitedefaultmidpunct}
{\mcitedefaultendpunct}{\mcitedefaultseppunct}\relax
\EndOfBibitem
\bibitem[Hordijk and Steel(2013)Hordijk, and Steel]{hor13}
Hordijk,~W.; Steel,~M. \emph{J. Syst. Chem.} \textbf{2013}, \emph{4}, 3\relax
\mciteBstWouldAddEndPuncttrue
\mciteSetBstMidEndSepPunct{\mcitedefaultmidpunct}
{\mcitedefaultendpunct}{\mcitedefaultseppunct}\relax
\EndOfBibitem
\bibitem[Steel \latin{et~al.}(2013)Steel, Hordijk, and Smith]{ste13}
Steel,~M.; Hordijk,~W.; Smith,~J. \emph{J. Theor. Biol.} \textbf{2013},
  \emph{332}, 96--107\relax
\mciteBstWouldAddEndPuncttrue
\mciteSetBstMidEndSepPunct{\mcitedefaultmidpunct}
{\mcitedefaultendpunct}{\mcitedefaultseppunct}\relax
\EndOfBibitem
\bibitem[Bollobas and Rasmussen(1989)Bollobas, and Rasmussen]{bol}
Bollobas,~B.; Rasmussen,~S. \emph{Discrete Math.} \textbf{1989}, \emph{75},
  55--68\relax
\mciteBstWouldAddEndPuncttrue
\mciteSetBstMidEndSepPunct{\mcitedefaultmidpunct}
{\mcitedefaultendpunct}{\mcitedefaultseppunct}\relax
\EndOfBibitem
\bibitem[Cvetkovi{\'c} \latin{et~al.}(1995)Cvetkovi{\'c}, Doob, and Sachs]{cv}
Cvetkovi{\'c},~D.~M.; Doob,~M.; Sachs,~H. \emph{Spectra of Graphs, third ed.};
  Barth, Heidelberg, 1995\relax
\mciteBstWouldAddEndPuncttrue
\mciteSetBstMidEndSepPunct{\mcitedefaultmidpunct}
{\mcitedefaultendpunct}{\mcitedefaultseppunct}\relax
\EndOfBibitem
\end{mcitethebibliography}


\providecommand{\latin}[1]{#1}
\makeatletter
\providecommand{\doi}
  {\begingroup\let\do\@makeother\dospecials
  \catcode`\{=1 \catcode`\}=2\doi@aux}
\providecommand{\doi@aux}[1]{\endgroup\texttt{#1}}
\makeatother
\providecommand*\mcitethebibliography{\thebibliography}
\csname @ifundefined\endcsname{endmcitethebibliography}
  {\let\endmcitethebibliography\endthebibliography}{}

\end{document}